\newtheorem{theorem}{Theorem}
\newtheorem{lemma}[theorem]{Lemma}
\newtheorem{corollary}[theorem]{Corollary}
\Crefname{lemma}{Lemma}{Lemma}
\Crefname{theorem}{Theorem}{Theorem}
\Crefname{conjecture}{Conjecture}{Conjecture}
\Crefname{corollary}{Corollary}{Corollary}
\Crefname{definition}{Definition}{Definition}
\newcommand{\im}{\operatorname{im}}
\newcommand{\id}{\operatorname{id}}
\newcommand{\Field}{\mathbb{F}}
\newcommand{\Integer}{\mathbb{Z}}
\newcommand{\Complex}{\mathbb{C}}
\newcommand{\Real}{\mathbb{R}}
\newcommand{\Aut}{\mathrm{Aut}}
\newcommand{\Disc}{\mathbb{D}}
\newcommand{\Homology}{\mathcal{H}}
\newcommand{\vspac}[1]{\Field_2^{#1}}
\newcommand{\csscode}{C}
\newcommand{\underlying}{C/\sim_{ZX}}
\newcommand{\zxdualities}{\mathcal{D}_{ZX}}
\newcommand\blfootnote[1]{%
	\begingroup
	\renewcommand\thefootnote{}\footnote{#1}%
	\addtocounter{footnote}{-1}%
	\endgroup
}
\begin{document}
	\title{Fold-Transversal Clifford Gates for Quantum Codes}
	
	\author{Nikolas P. Breuckmann}
	\affiliation{Department of Computer Science, University College London, WC1E 6BT London, United Kingdom}
	\orcid{0000-0002-7211-5515}
	\email{niko.breuckmann@bristol.ac.uk}
	\homepage{http://nikobreu.website}

    \author{Simon Burton}
	\affiliation{Institute of Physics, Jagiellonian University, \L{}ojasiewicza 11, 30-348 Krak\'ow, Poland}
	\orcid{0000-0002-8932-3492}
	\email{simon@arrowtheory.com}
	\homepage{http://arrowtheory.com}
	
	\maketitle
	\blfootnote{The authors contribute equally to this work.}

	\begin{abstract}
		We generalize the concept of folding from surface codes to CSS codes by considering certain dualities within them.
		In particular, this gives a general method to implement logical operations in suitable LDPC quantum codes using transversal gates and qubit permutations only.
		
		To demonstrate our approach, we specifically consider a $[[30,8,3]]$ hyperbolic quantum code called Bring's code.
		Further, we show that by restricting the logical subspace of Bring's code to four qubits, we can obtain the \emph{full} Clifford group on that subspace.
	\end{abstract}

	\section{Introduction}
	We show how symmetries of CSS quantum codes can be utilized to implement encoded Clifford gates, using only transversal gates and qubit permutations.
	Our scheme is not only trivially fault-tolerant, but it also incurs no time-overhead and does not require any additional ancilla qubits.
	It can be understood as a generalization of the concept of folding surface codes due to Moussa~\cite{moussa2016transversal}, see \Cref{fig:ZX-dualities}.
	
	Folding of topological quantum codes is part of the folklore relating surface codes and color codes:
	folded surface codes give color codes, or alternatively, unfolded color codes give surface codes~\cite{Kubica_2015}.

	\begin{figure*}[t]
	  \centering
	   \begin{subfigure}[b]{0.3\textwidth}
	      \includegraphics[width=\textwidth]{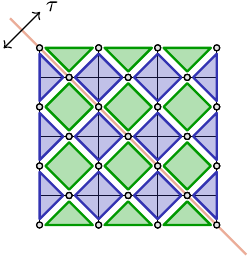}
	      \caption{
	        Moussa's original construction on
	        a surface code.
	        The $ZX$-duality $\tau$ is reflecting along a diagonal.
	        }
	      \label{fig:fold-surface}
	   \end{subfigure}
	    ~
	   \begin{subfigure}[b]{0.3\textwidth}
	      \includegraphics[width=\textwidth]{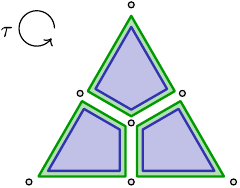}
	      \caption{
	        The Steane code is self-dual, and so the identity
	        permutation serves as a $ZX$-duality.
	        }
	      \label{fig:fold-steane}
	   \end{subfigure}
	    ~
	   \begin{subfigure}[b]{0.3\textwidth}
	      \includegraphics[width=\textwidth]{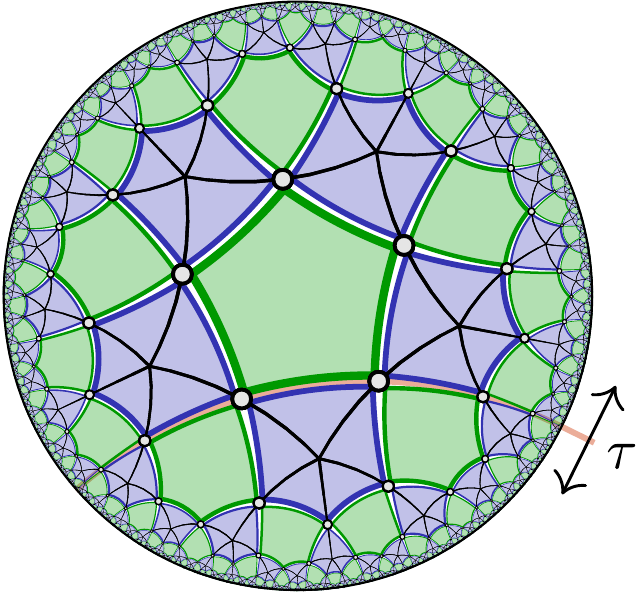}
	      \caption{
	        On a $\{5,5\}$-tiled hyperbolic surface, a $ZX$-duality is shown reflecting along a geodesic.
	        }
	      \label{fig:fold-hyperbolic-55}
	   \end{subfigure}
		\caption{How $ZX$-dualities define folds of quantum codes. 
	        Qubits are shown as grey circles.
	        The green/blue faces correspond to the $Z$- and $X$-checks. 
	        The $ZX$-duality $\tau$ maps between the two types of checks.
	        The codes (i) and (iii) are built from the homology of a tiling (cellulation) shown in black.
	        }
		\label{fig:ZX-dualities}
	\end{figure*}

	We generalize the concept of folding to arbitrary CSS codes by introducing \emph{fold-transversal gates}.
	While our approach works for any sufficiently symmetric CSS code, we believe it is particularly suited to implement logical gates in low-density parity-check (LDPC) quantum codes.
	These codes  have recently attracted a lot of attention, see~\cite{PRXQuantum.2.040101} for a recent review.
	This is partially due to the fact that constant overhead fault-tolerant quantum computation can be realized with these codes~\cite{gottesman2014overhead,fawzi2018constant}.
	Recent developments have shown that it is possible to construct high-performance LDPC quantum codes from classical codes~\cite{hastings2021fiber,panteleev2021quantum,breuckmann2021balanced}.
	In particular, \cite{breuckmann2021balanced} gave a construction that was conjectured to have optimal asymptotic parameter scaling, which was subsequently proven in~\cite{panteleev2021asymptotically}.
	
	While the construction of LDPC quantum codes has made big advances, it is still unclear how to best manipulate the encoded information.
	The constant overhead in~\cite{gottesman2014overhead} was achieved by implementing gates via ancillary states.
	However, in order to keep the qubit overhead constant in their proof, the implementation of gates had to be serialized, making the scheme less practical.
	
	For specific families of LDPC quantum codes, a few techniques to fault-tolerantly implement logical gates have been introduced.
	For hyperbolic surface codes, Breuckmann et~al. considered code deformations in order to perform $\operatorname{CNOT}$-gates~\cite{hyperbolic_quantum_storage}.
	Further, Lavasani--Barkeshli discussed the implementation of other Clifford gates on hyperbolic surface codes by using ancillary states~\cite{lavasani2018low}.
	Krishna--Poulin~\cite{krishna2021fault} considered generalizations of code deformation techniques of the surface code to hypergraph product codes in order to implement Clifford gates.
	On the other hand, Burton--Browne~\cite{burton2021limitations} showed that it is not possible to obtain transversal logical gates outside of the Clifford group using certain hypergraph product codes.
	
	More recently, Cohen et al. generalized lattice surgery between two logical operators of an arbitrary LDPC quantum code~\cite{cohen2021low}.
	The surgery is facilitated by an ancillary surface code that grows with the code distance.
	Hence, for codes with large distance, it is necessary to serialize the quantum circuit to avoid a large overhead, as in~\cite{gottesman2014overhead}.
	
	In comparison, the logical gates constructed in our work do not incur any qubit overhead, as they are realized within the quantum code directly, avoiding the need for ancilla qubits.
	We also avoid any time overhead, as logical gates are implemented by a transversal circuit of single- and two-qubit gates, which can be applied in parallel.
	This also means that our implementation is compatible with any decoding scheme, as it can be executed between rounds of error correction.
	
	We explicitly construct fault transversal gates for a certain $[[30,8,3]]$ hyperbolic surface code that we call Bring's code.
	We show that we can generate a subgroup of the Clifford group~$\mathcal{C}_8$ using transversal circuits.
	By adding a single logical entangling gate, which can be obtained using other techniques, we generate the full Clifford group~$\mathcal{C}_8$.
	Alternatively, when we restrict the logical subspace to four qubits, we can generate all of~$\mathcal{C}_4$, using transversal circuits only.

	In this work we do not consider any constraints on the locality of qubit interactions.
	In particular, we make use of qubit permutations, which in principle, can map any qubit to any other qubit.
	Thus, in order to take full advantage of our construction, a potential hardware architecture ideally supports non-local interactions between qubits.
	Fortunately, there have been several advances in experimental setups that can realize such unconstrained connectivity.
	Linear optical quantum computation, for example, is naturally flexible in terms of connectivity \cite{rudolph2017optimistic,bombin2021interleaving,bartolucci2021fusion}.
	The Moussa folding construction is used by the fusion-based photonic scheme proposed in~\cite{Kim2021}.
	In a similar vain, modular architectures, in which modules are interlinked by a photonic interface, allow for non-local connectivity~\cite{monroe2013scaling,nigmatullin2016minimally,nickerson2014freely}.
	Other approaches to quantum computation, such as mobile qubits~\cite{bluvstein2021quantum} or qubits coupled to a common cavity mode~\cite{PhysRevLett.75.3788,PhysRevA.94.053830}, even allow for fast any-to-any connectivity between qubits.
	In particular, \cite{ramette2021any} proposes an architecture based on Rydberg atoms, or alternatively, ion-traps, in which it is possible to have fast, long-range interactions between hundreds of physical qubits.
	
	Bring's code is small enough that this scheme could be implemented on NISQ hardware.
	Together with the fold-transversal gates it provides a simple set-up in which fault-tolerant Clifford gates could be demonstrated.
	
In \Cref{sec:folding} we introduce the theory of \emph{fold-transversal} gates.
When discussing stabilizer codes, we largely employ the language of homology 
which is introduced in \Cref{sec:css}.
In \Cref{sec:ZX-dualities}
we give the definition of a \emph{$ZX$-duality} of a homological code, and 
show how the symmetries of the code interact with the $ZX$-dualities.
We then apply this to a class of highly symmetric two-dimensional hyperbolic surface codes
in \Cref{sec:surface-codes}. These codes all have $ZX$-dualities, which are
described in \Cref{thm:hyperbolic}.
Armed with this theory of $ZX$-dualities, in \Cref{sec:fold_transversal_gates}
we characterize when these dualities give rise to \emph{fold-transversal} Clifford gates on the 
logical subspace of the code.
In \Cref{sec:brings_code} we discuss the example of
\emph{Bring's code} and work out in detail the fold-transversal gates
and their action on the logical subspace.
	
	\section{Folding along a $ZX$-duality}\label{sec:folding}
	
	\subsection{CSS quantum codes}\label{sec:css}
	A \emph{stabilizer code} is defined by a subgroup~$\mathcal{S}$ of the Pauli group~$\mathcal{P}_n$ which is abelian and does not contain~$-I$.
	We call the elements of some distinguished set of generators the \emph{(stabilizer) checks} of the stabilizer code.
	A stabilizer code is a \emph{Calderbank-Shor-Steane (CSS) code} if there exists a generating set of~$\mathcal{S}$ such that each generator acts as either Pauli-$X$ or Pauli-$Z$ on all qubits in its support.
	Hence, we can define a CSS code $\csscode$ in terms of two binary matrices $H_X\in \mathbb{F}_2^{r_X\times n}$ and $H_Z\in \mathbb{F}_2^{r_Z\times n}$ where each row corresponds to the support indication vector of a Pauli-$X$ or Pauli-$Z$ generator, respectively.
Note we do not require~$H_X$ or~$H_Z$ to have full rank.
	As $X$-type and $Z$-type Pauli operators commute if and only if the overlap of their supports contains an even number of qubits we must have
	\begin{align}\label{eqn:CSS_commutation}
		H_X H_Z^\top = 0.
	\end{align}
	
	In the language of homology, a CSS code $\csscode$ is equivalent to a chain complex 
 over the field $\Field_2=\Integer/2$:
$$
    \csscode = \{ \vspac{r_Z} \xrightarrow{H_Z^\top} \vspac{n} \xrightarrow{H_X} \vspac{r_X} \}.
$$
We can think of $H_Z^\top$ and $H_X$ as \emph{boundary operators}~$\partial_2$ and~$\partial_1$, respectively:
\begin{center}
\begin{tikzcd}
C_2  \arrow[r, "\partial_2"] & C_1  \arrow[r, "\partial_1"] & C_0
\end{tikzcd}
\end{center}
where $C_2 = \mathbb{F}_2^{r_Z}$, $C_1 = \mathbb{F}_2^n$ and $C_0 = \mathbb{F}_2^{r_X}$.
The elements of~$C_i$ are called \emph{$i$-chains}.
Note that the space of $i$-chains comes with a natural basis due to the construction
of $C_i$ as a free vector space on a basis.
The fact that ``the boundary of a boundary is zero'',
i.e.~$\partial_1\, \partial_2 = 0$, is guaranteed by
\Cref{eqn:CSS_commutation}.
The group generated by the $Z$-type stabilizer operators
$\mathcal{S}_Z$ corresponds to boundaries $B_1 = \im
\partial_2$ and the $Z$-type Pauli-operators normalizing
the $X$-type stabilizer operators $N(\mathcal{S}_X)_Z$
corresponds to cycles $Z_1 = \ker \partial_1$.
Similarly, $\mathcal{S}_X$ and $N(\mathcal{S}_Z)_X$ correspond
to coboundaries $B^1 = \im \delta_0$ and cocycles $Z^1
= \ker \delta_1$, respectively, where $\delta_0 = \partial_1^\top = H_X^\top$
and $\delta_1 = \partial_2^\top=H_Z$.
The $Z$-type logical operators of the code are found 
as cycles modulo boundaries, 
which is the (first) \emph{homology} of the code,
$\Homology_1 := Z_1 / B_1.$
Dually, the $X$-type logical operators 
are cocycles modulo coboundaries.
This is the (first) \emph{cohomology} of the code,
$\Homology^1 := Z^1 / B^1.$
A (co)cycle that is not trivial modulo a (co)boundary is called
an \emph{essential (co)cycle}.

	\subsection{$ZX$-Dualities}\label{sec:ZX-dualities}

%

Given a CSS Code $\csscode=(H_Z,H_X)$,
the \emph{dual} CSS code is
$$
    \csscode^\top = \{ \vspac{r_X} \xrightarrow{H_X^\top} \vspac{n} \xrightarrow{H_Z} \vspac{r_Z} \}.
$$
A \emph{self-dual} CSS code $\csscode$ has $\csscode=\csscode^\top.$

An \emph{isomorphism} of CSS codes $\sigma: \csscode\to \csscode'$ is a
triple of permutation matrices $(\sigma_Z,\sigma_n,\sigma_X)$ where
\begin{align*}
\sigma_Z&:\vspac{r_Z}\to \vspac{r'_Z},\\
\sigma_n&:\vspac{n}\to \vspac{n'},\\
\sigma_X&:\vspac{r_X}\to \vspac{r'_X},
\end{align*}
such that the following diagram commutes:
\[
\begin{tikzcd}
\vspac{r_Z} \arrow{d}{\sigma_Z} \arrow{r}{H^\top_Z}
    & \vspac{n} \arrow{d}{\sigma_n} \arrow{r}{H_X}
    & \vspac{r_X} \arrow{d}{\sigma_X} \\
\vspac{r'_Z} \arrow{r}{H'^\top_Z} & \vspac{n'} \arrow{r}{H'_X} & \vspac{r'_X}
\end{tikzcd}
\]
The \emph{automorphism group} of a CSS code $\csscode$ is
the group of self-isomorphisms $\Aut(\csscode) = \{\sigma:\csscode\to \csscode\}.$

We are also interested in weaker notions of isomorphism between
codes that allow for swapping the $X$- and $Z$-checks.
Given a CSS code $\csscode=(H_Z, H_X)$ we define the 
\emph{underlying classical} code $\underlying$ 
to be given by the parity 
check matrix $H$ as:
$$
\underlying := \Bigl\{
\vspac{n} \xrightarrow{H := \left(\begin{smallmatrix}H_X\\ H_Z \end{smallmatrix}\right)} \vspac{r}
\Bigr\}.
$$
where $r:=r_X + r_Z.$
This classical code remembers all the checks
of the quantum code, but it has forgotten whether each check is
an $X$- or $Z$-type check.

An \emph{automorphism} of a classical code with parity check matrix
$H:\vspac{n}\to\vspac{r}$ is a pair of permutation matrices
$(\sigma_n,\sigma_r)$ where 
\begin{align*}
\sigma_n&:\vspac{n}\to \vspac{n},\\
\sigma_r&:\vspac{r}\to \vspac{r},
\end{align*}
such that the following diagram commutes:
\[
\begin{tikzcd}
\vspac{n} \arrow{d}{\sigma_n} \arrow{r}{H} & \vspac{r} \arrow{d}{\sigma_r} \\
\vspac{n} \arrow{r}{H} & \vspac{r}
\end{tikzcd}
\]
Given a CSS code $\csscode = (H_Z,H_X)$ we write the automorphisms
of the underlying classical code $\underlying$ as 
$$
\mathcal{G}:=\Aut(\underlying).
$$
The subgroup of $\mathcal{G}$ that fixes the $X$- and
$Z$-sectors corresponds precisely to $\Aut(\csscode)$ defined above.
In more detail, these elements of~$\mathcal{G}$ 
are pairs $(\sigma_n,\sigma_r)$ where $\sigma_r$
has block form
$$
\sigma_r = 
\left(
\begin{array}{cc}
\sigma_X & 0 \\
0 & \sigma_Z 
\end{array}
\right)
$$
and then $(\sigma_Z,\sigma_n,\sigma_X)$ is the corresponding
automorphism of $\csscode$.

The elements $\tau$ of $\mathcal{G}$ that swap the $X$- and
$Z$-sectors are called \emph{$ZX$-dualities}.
These are pairs $\tau=(\tau_n, \tau_r)$ such that
$\tau_r$ has the block form:
$$
\tau_r = \left( \begin{array}{cc} 0 & \tau_Z \\ \tau_X & 0 \end{array} \right).
$$
In this case, $\tau=(\tau_Z,\tau_n,\tau_X)$ is an isomorphism
$\tau:\csscode\to \csscode^\top,$
and we call such a code \emph{self-$ZX$-dual}.
The set of all $ZX$-dualities of~$\csscode$ we denote by~$\zxdualities$.

\begin{lemma}\label{lem:zx-duality}
Given a self-$ZX$-dual CSS code~$\csscode$,
the set of $ZX$-dualities for $\csscode$
is a coset of $\Aut(\csscode)$ in $\mathcal{G}$:
$$ \zxdualities = \tau \Aut(\csscode),$$
where $\tau$ is any choice of a $ZX$-duality for $\csscode$.
\end{lemma}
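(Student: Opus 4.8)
The plan is to show the two inclusions $\zxdualities \subseteq \tau\Aut(\csscode)$ and $\tau\Aut(\csscode) \subseteq \zxdualities$, after first checking that $\zxdualities$ is nonempty (which is exactly the hypothesis that $\csscode$ is self-$ZX$-dual, so we may fix one $\tau \in \zxdualities$). The key structural observation is that everything lives inside the single group $\mathcal{G} = \Aut(\underlying)$, so composition is just group multiplication, and the whole statement reduces to understanding how the block structure of $\sigma_r \in \mathcal{G}$ behaves under multiplication.

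First I would record the linear-algebra fact underlying the whole argument: an element $g=(g_n,g_r)\in\mathcal{G}$ has a well-defined ``type'' according to whether the permutation $g_r$ preserves the partition of the $r = r_X+r_Z$ checks into the $X$-block and the $Z$-block, or swaps them. Because the $X$- and $Z$-checks are the two parts of a fixed partition of the basis of $\vspac{r}$, and $g_r$ is a permutation matrix, there are only these two possibilities for an element that does map blocks to blocks; $\Aut(\csscode)$ consists of those with block-diagonal $g_r = \operatorname{diag}(\sigma_X,\sigma_Z)$ and $\zxdualities$ of those with block-antidiagonal $g_r = \left(\begin{smallmatrix}0 & \tau_Z\\ \tau_X & 0\end{smallmatrix}\right)$. (Not every element of $\mathcal{G}$ need be of one of these two types in general, but the elements of $\Aut(\csscode)\cup\zxdualities$ are, by definition.) The multiplication rule is then the obvious one: diagonal times diagonal is diagonal, diagonal times antidiagonal is antidiagonal, antidiagonal times antidiagonal is diagonal — i.e.\ this is a $\Integer/2$-grading on the subgroup $\langle \Aut(\csscode),\zxdualities\rangle$.

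With that in hand the two inclusions are short. For $\tau\Aut(\csscode)\subseteq\zxdualities$: take any $\sigma\in\Aut(\csscode)$; then $\tau\sigma = (\tau_n\sigma_n,\ \tau_r\sigma_r)$, and since $\tau_r$ is antidiagonal and $\sigma_r$ is diagonal, the product $\tau_r\sigma_r$ is again antidiagonal — explicitly $\left(\begin{smallmatrix}0 & \tau_Z\sigma_Z\\ \tau_X\sigma_X & 0\end{smallmatrix}\right)$ — so $\tau\sigma\in\mathcal{G}$ has the block form defining a $ZX$-duality, hence $\tau\sigma\in\zxdualities$. For $\zxdualities\subseteq\tau\Aut(\csscode)$: take any $\tau'\in\zxdualities$ and consider $\tau^{-1}\tau'$. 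One checks that $\tau^{-1}\in\mathcal{G}$ still has antidiagonal $r$-block (the inverse of an antidiagonal block permutation matrix is antidiagonal), so $\tau^{-1}\tau'$ has diagonal $r$-block, i.e.\ $\tau^{-1}\tau'\in\Aut(\csscode)$; therefore $\tau' = \tau(\tau^{-1}\tau')\in\tau\Aut(\csscode)$. Combining the two inclusions gives $\zxdualities = \tau\Aut(\csscode)$, and since this holds for the fixed $\tau$ and any other $\tau'\in\zxdualities$ also lies in this coset, the coset is independent of the choice of representative.

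The only genuinely substantive point — and the one I would make sure to state carefully rather than wave at — is the claim that $\Aut(\csscode)$, as defined via the commuting square in \Cref{sec:ZX-dualities}, coincides with the ``block-diagonal part of $\mathcal{G}$'': that a triple $(\sigma_Z,\sigma_n,\sigma_X)$ making the CSS diagram commute is the same datum as a pair $(\sigma_n,\sigma_r)$ with $\sigma_r = \operatorname{diag}(\sigma_X,\sigma_Z)$ making the classical-code diagram commute. This is essentially unwound already in the paragraph preceding the lemma, so I would just invoke it; likewise the identification of a $ZX$-duality with an isomorphism $\csscode\to\csscode^\top$. Everything else is the bookkeeping of $2\times 2$ block permutation matrices, which I would not grind through in detail — the one-line remark that block-antidiagonal times block-antidiagonal is block-diagonal is the whole content.
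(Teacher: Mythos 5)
Your proposal is correct and follows essentially the same route as the paper's proof: both directions of the coset equality are established by observing that composing an antidiagonal $r$-block with a diagonal one stays antidiagonal (giving $\tau\Aut(\csscode)\subseteq\zxdualities$), and that $\tau^{-1}\tau'$ has a diagonal $r$-block (giving the reverse inclusion). Your $\Integer/2$-grading remark is a pleasant way to summarize the block bookkeeping, but the underlying computation is the same.
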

\begin{proof}
It is not hard to see that from a given $ZX$-duality $\tau:\csscode\to \csscode^\top$
and an isomorphism $\sigma:\csscode\to \csscode$ we get another $ZX$-duality from
the composition $\tau\sigma$.
Conversely,
given another $ZX$-duality $\tau'=(\tau_n',\tau_r')$
we find that
\begin{align*}
\tau_r^{-1} \tau'_r &=
\left( \begin{array}{cc} 0 & \tau_Z \\ \tau_X & 0 \end{array} \right)^{-1}
\left( \begin{array}{cc} 0 & \tau'_Z \\ \tau'_X & 0 \end{array} \right)\\
&=
\left( \begin{array}{cc}
\tau_X^{-1} \tau'_X & 0 \\
0 & \tau_Z^{-1} \tau'_Z 
\end{array} \right)
\in \Aut(\csscode).
\end{align*}
Therefore we have shown the identity
$\zxdualities = \tau \Aut(\csscode).$
\end{proof}

\begin{corollary}\label{cor:number_automorphisms}
For a self-$ZX$-dual code $\csscode$, 
the number of $ZX$-dualities 
equals the size of the automorphism group of $\csscode$: 
$$|\zxdualities| = |\Aut(\csscode)|.$$
\end{corollary}

Given CSS codes $C=(H_Z,H_X)$ and $C'=(H'_Z,H'_X)$
we can form the \emph{direct sum} CSS code:
$$
C\oplus C':= (H_Z\oplus H'_Z, H_X\oplus H'_X).
$$

\begin{lemma}\label{lem:stacking}
Given a CSS code $\csscode$ with $ZX$-duality
$\tau:\csscode\to\csscode^\top$ then
$\csscode\oplus\csscode$ has a $ZX$-duality
without any fixed points.
\end{lemma}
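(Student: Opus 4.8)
The plan is to build the required duality on $\csscode\oplus\csscode$ out of the given $\tau$ together with the permutation that interchanges the two copies — informally, ``apply $\tau$ to each copy, then swap the copies'' — which forces every qubit and every check to be moved off itself. Concretely, write the given $ZX$-duality as a triple $\tau=(\tau_Z,\tau_n,\tau_X)$, regarded (as in the excerpt) as an isomorphism $\tau:\csscode\to\csscode^\top$, i.e.\ a pair of commuting squares $\tau_n H_Z^\top=H_X^\top\tau_Z$ and $\tau_X H_X=H_Z\tau_n$. Since $(\csscode\oplus\csscode)^\top=\csscode^\top\oplus\csscode^\top$, it suffices to produce an isomorphism $\tilde\tau:\csscode\oplus\csscode\to\csscode^\top\oplus\csscode^\top$ that is fixed-point-free and whose induced permutation of the checks has the off-diagonal block form the excerpt requires of a $ZX$-duality. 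With respect to the decompositions $\vspac{2n}=\vspac{n}\oplus\vspac{n}$, $\vspac{2r_Z}=\vspac{r_Z}\oplus\vspac{r_Z}$ and $\vspac{2r_X}=\vspac{r_X}\oplus\vspac{r_X}$ coming from the direct sum, I would set
\[
\tilde\tau_n=\begin{pmatrix}0&\tau_n\\\tau_n&0\end{pmatrix},\qquad
\tilde\tau_Z=\begin{pmatrix}0&\tau_Z\\\tau_Z&0\end{pmatrix},\qquad
\tilde\tau_X=\begin{pmatrix}0&\tau_X\\\tau_X&0\end{pmatrix}.
\]

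Next I would verify that $\tilde\tau=(\tilde\tau_Z,\tilde\tau_n,\tilde\tau_X)$ really is a $ZX$-duality of $\csscode\oplus\csscode$. The boundary maps of $\csscode\oplus\csscode$ are $H_Z^\top\oplus H_Z^\top$ and $H_X\oplus H_X$, so expanding the two relevant $2\times2$ block products shows that each commuting square for $\tilde\tau$ is equivalent to one of the two squares satisfied by $\tau$ — a one-line computation in each case. The combined check permutation $\begin{pmatrix}0&\tilde\tau_Z\\\tilde\tau_X&0\end{pmatrix}$ is manifestly in off-diagonal block form relative to the $X$- and $Z$-check sectors of $\csscode\oplus\csscode$, so $\tilde\tau$ genuinely swaps the sectors and hence is a $ZX$-duality.

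Finally — and this is the actual content of the statement — I would observe that $\tilde\tau_n$ has no fixed points: it sends every basis vector of the first summand $\vspac{n}\oplus 0$ into the second summand $0\oplus\vspac{n}$ and vice versa, so no qubit is fixed; the identical argument shows that $\tilde\tau_Z$, $\tilde\tau_X$, and hence the full check permutation, are fixed-point-free as well. I do not expect a real obstacle here; the only thing that needs care is the bookkeeping — fixing once and for all how the basis of $\csscode\oplus\csscode$ and the ordering of its $X$- and $Z$-check sectors are indexed by the two copies, and checking that the block forms above are written consistently with that convention.
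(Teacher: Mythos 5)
Your construction — apply $\tau$ to each summand and then swap the two copies, written out as the off-diagonal block matrices $\tilde\tau_n,\tilde\tau_Z,\tilde\tau_X$ — is exactly the paper's proof, just with the bookkeeping made explicit. The fixed-point-free observation is the same in both: the map sends each summand into the other, so nothing can be fixed.
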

\begin{proof}
We define 
the $ZX$-duality given by $\tau\oplus\tau$
followed by the permutation that swaps the two summands:
$$
C\oplus C \xrightarrow{\tau\oplus\tau} C^\top\oplus C^\top
\xrightarrow{\mathrm{swap}}  C^\top\oplus C^\top = (C\oplus C)^\top.
$$
\end{proof}

In what follows we will often abuse notation
and write $\tau(i)$ for the action of $\tau_n$ on
the $i$-th basis vector of $\Field_2^n$.

\subsection{Hyperbolic surface codes}\label{sec:surface-codes}

In this section we review the construction of
hyperbolic triangle groups
(\cite{Girondo2012} Section 2.4), and how they
give rise to two dimensional hyperbolic surface codes~\cite{hyperbolic_codes,breuckmann2017homological}.
We then find the $ZX$-dualities in this context.

The Poincar\'e disc model of the hyperbolic plane 
is written as $\Disc = \{z\in\Complex : |z|<1\}.$
The group of orientation preserving automorphisms of
$\Disc$ is given by the M\"{o}bius transforms,
\begin{align*}
\Aut(\Disc^{+}) &= \Bigl\{ z \mapsto \frac{az + b}{cz + d} \ \Bigl|\  a,b,c,d\in\Real,\\
    & \ \ \ \ \  ad-bc=1 \Bigr\} \\
    & \cong \mathrm{PSL}(2,\Real).
\end{align*}
By also including the reflection given by
complex conjugation $z\mapsto \bar{z}$, we
generate the full automorphism group $\Aut(\Disc).$

\newcommand{\Rplus}{R^{+}_{5,5}}
\newcommand{\Gplus}{\mathcal{G}^{+}}
The Coxeter reflection group for the 
$\{5,5\}$-tiling of $\Disc$ has presentation
\begin{align*}
R_{5,5} := \langle a, b, c\ |\ a^2, b^2, c^2, (ab)^5, (ac)^2, (bc)^5 \rangle.
\end{align*}
The generators $a, b, c$ here correspond to reflections
that \emph{destabilize} a vertex, edge, or face respectively.
We can summarize this presentation using the Coxeter diagram:
\begin{center}
\includegraphics[]{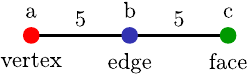}
\end{center}
See \Cref{fig:poincare-disc-525}.
The rotations $ab, ac, bc$ correspond to the rotation
around a face, edge or vertex respectively.
The subgroup generated by these rotations we
write as
$$
    \Rplus = \langle ab, ac, bc \rangle.
$$
This subgroup has index 2 in $R_{5,5}.$
Next we choose a normal subgroup $\Gamma$ of $\Rplus$
with finite index, that acts freely on $\Disc.$
The surface we are interested in is the compact Riemann surface
$$
    S = \Disc / \Gamma.
$$
This surface is tiled by triangles, which we can
identify with the elements of the group $\Gplus := \Rplus/\Gamma$
once we choose a fiducial tile to act upon.
Up to conjugation, the stabilizer subgroup of a face, edge or vertex
is generated by the rotations $ab, ac, bc \in \Gplus$, respectively.
Therefore, the chain complex of our quantum code is
obtained from the free vector spaces generated by cosets of these:
\begin{align}\label{eqn:hcode}
 C = \Bigl\{\ \vspac{r_Z} & := \Field_2[\Gplus / \langle ab \rangle] \nonumber \\
\xrightarrow{H_Z^\top} \vspac{n} &  := \Field_2[\Gplus / \langle ac \rangle] \nonumber \\
\xrightarrow{H_X} \vspac{r_X} &  := \Field_2[\Gplus / \langle bc \rangle]\ \Bigr\}.
\end{align}

Proceeding with the theory developed in \Cref{sec:ZX-dualities}
we now seek the automorphism group $\mathcal{G}$  of the underlying classical
code of $C$.
We can do this using another
Coxeter reflection group $R_{5,4}$ for the 
$\{5,4\}$-tiling of $\Disc$. This has presentation
$$
R_{5,4} := \langle d, e, f\ |\ d^2, e^2, f^2, (de)^5, (df)^2, (ef)^4 \rangle.
$$
The generators $d, e, f$ here correspond to reflections
that destabilize a vertex, edge, or face respectively.
The rotations $de, df, ef$ correspond to the rotation
around a face, edge or vertex respectively.
We can summarize this presentation using the Coxeter diagram:
\begin{center}
\includegraphics[]{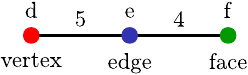}
\end{center}
See \Cref{fig:poincare-disc-524}.

\begin{lemma}\label{lem:coxeter_fold}
$R_{5,5}$ has an outer automorphism that swaps the
generators $a, c$ while fixing $b$.
Moreover, $R_{5,5}$ is a subgroup of $R_{5,4}$
in which the outer automorphism is conjugation by $f$.
\end{lemma}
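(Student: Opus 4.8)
The plan is to establish the two claims separately: first that the abstract Coxeter group $R_{5,5}$ admits the asserted automorphism, and second that this automorphism is realized inside $R_{5,4}$ by conjugation by $f$.

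For the first claim, I would work purely at the level of the Coxeter presentation. The group $R_{5,5}$ has presentation $\langle a,b,c \mid a^2, b^2, c^2, (ab)^5, (ac)^2, (bc)^5\rangle$. Consider the map $\phi$ on generators given by $a\mapsto c$, $b\mapsto b$, $c\mapsto a$. To see this extends to a group homomorphism, it suffices to check that the images satisfy all the defining relations: $c^2=1$, $b^2=1$, $a^2=1$ are immediate; $(cb)^5=1$ follows from $(bc)^5=1$ (as $(cb)^5 = ((bc)^5)^{-1}$ conjugated, or directly since $cb = (bc)^{-1}$ in any group where $b,c$ are involutions); $(ca)^2=1$ follows from $(ac)^2=1$; and $(ba)^5=1$ follows from $(ab)^5=1$. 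Hence $\phi$ is an endomorphism, and since $\phi^2=\id$ on generators it is an automorphism. To see $\phi$ is outer one can argue that the Coxeter diagram of $R_{5,5}$ has a symmetry exchanging the two end nodes (labelled by $a$ and $c$) — equivalently, $\phi$ does not preserve the conjugacy classes in the way an inner automorphism of a Coxeter group must, or one can simply note that no element of $R_{5,5}$ conjugates $a$ to $c$ while fixing $b$, since $a$ and $c$ lie in different conjugacy classes (reflections in vertices vs.\ faces of the $\{5,5\}$-tiling are geometrically distinguishable and not conjugate). I'd keep this last point brief, citing the standard classification of conjugacy classes of reflections in Coxeter groups (reflections are conjugate only if the corresponding nodes are connected by a path of odd-labelled edges; here the edge $a$–$b$ has label $5$, odd, so actually $a$ and $b$ \emph{are} conjugate, but $a$ and $c$ sit symmetrically) — so I'd instead emphasize that $\phi$ is the \emph{diagram automorphism} and merely remark it is well known to be non-inner for this diagram, or drop the word ``outer'' to only what is needed downstream.

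For the second claim, the geometric picture is the key: the $\{5,5\}$-tiling and the $\{5,4\}$-tiling are related because the $\{5,5\}$ triangle group sits inside the $\{5,4\}$ triangle group as the subgroup generated by reflections in the sides of a triangle obtained by ``doubling'' a $\{5,4\}$-triangle across the side opposite its order-$4$ vertex. Concretely, I would identify $a = d$, $b = e$, and $c = f d f$ (so that $c$ is the reflection in the image of the $a$-wall under the reflection $f$). One then checks the relations of $R_{5,5}$ hold for these elements inside $R_{5,4}$: $a^2 = d^2 = 1$, $b^2 = e^2 = 1$, $c^2 = (fdf)^2 = fd^2f = 1$; $(ab)^5 = (de)^5 = 1$; $(ac)^2 = (d\cdot fdf)^2$, which one computes using $(df)^2=1$ (so $df = fd$) to get $(dfdf)^2 = (df)^4 = 1$, in fact $(ac)^2 = 1$; and $(bc)^5 = (e\cdot fdf)^5$ — here I'd use $(ef)^4 = 1$ together with $(df)^2=1$ and $(de)^5=1$ to reduce $efdf$ and show its fifth power is trivial; this computation is the one genuinely fiddly verification and I'd carry it out via the standard fact that $\langle d,e,f\rangle$ acts on the $\{5,4\}$-tiling, where $efdf$ is visibly a rotation of order $5$ about a vertex of the refined $\{5,5\}$-pattern. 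Granting these relations, $\langle d, e, fdf\rangle$ is a quotient of $R_{5,5}$; that it is all of $R_{5,5}$ (i.e.\ the map is an isomorphism onto its image, and the image has index $2$) follows from an index count, since $R_{5,5}$ has index $2$ in $R_{5,4}$ by comparing the hyperbolic areas of the fundamental triangles (a $(2,5,5)$-triangle has twice the area of a $(2,4,5)$-triangle, by the Gauss–Bonnet angle-defect formula $\pi - \pi/2 - \pi/5 - \pi/5$ versus $\pi - \pi/2 - \pi/4 - \pi/5$). Finally, conjugation by $f$ sends $a = d \mapsto fdf = c$, sends $c = fdf \mapsto f(fdf)f = d = a$, and sends $b = e \mapsto fef$; it remains to check $fef = e$, i.e.\ that $e$ and $f$ commute — wait, that is false in general, so I must instead choose the embedding so that $b$ is fixed by $f$-conjugation, which forces $b$ to be the reflection in the $f$-wall's perpendicular, i.e.\ one should take $b$ to be an element commuting with $f$; the correct choice is $b = ef e$ or to place the $\{5,5\}$-triangle symmetrically about the $f$-mirror. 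The honest statement is that one picks the fundamental $(2,5,5)$-triangle to straddle the $f$-mirror of $R_{5,4}$, so that $f$-conjugation is exactly the reflection symmetry of that triangle across its axis, fixing the wall perpendicular to the mirror (this is $b$) and swapping the two walls exchanged by the symmetry (these are $a$ and $c$).

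The main obstacle is exactly this last bookkeeping: choosing the embedding $R_{5,5}\hookrightarrow R_{5,4}$ with the correct normalization so that (i) all three Coxeter relations of $R_{5,5}$ are satisfied by the chosen generators and (ii) conjugation by $f$ simultaneously fixes $b$ and swaps $a\leftrightarrow c$. Once the right fundamental triangle is fixed — the $(2,5,5)$-triangle symmetric under the $R_{5,4}$-reflection $f$, which exists because $f$'s mirror bisects the angle at the order-$4$ vertex, turning two adjacent $(2,4,5)$-triangles into one $(2,5,5)$-triangle — everything else is a routine check using the Gauss–Bonnet area count for the index and the defining relations of $R_{5,4}$ for the relation verification. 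I would present the argument geometrically (drawing the picture of the two tilings superimposed, as referenced by \Cref{fig:poincare-disc-524}) and relegate the relation checks to the observation that $\langle d, e, f\rangle$ acts faithfully on the $\{5,4\}$-tiling, so identities among words in $d,e,f$ can be verified by tracking the induced motion of a fiducial triangle.
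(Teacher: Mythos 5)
Your first half (the presentation check for the diagram automorphism) is fine and matches the paper's terse remark. The problem is in the embedding $R_{5,5}\hookrightarrow R_{5,4}$: the map $a\mapsto d$, $b\mapsto e$, $c\mapsto fdf$ that you propose is not a monomorphism. Since $(df)^2=1$ in $R_{5,4}$, you have $fdf=d$, so your $c$ equals your $a$; the map collapses $R_{5,5}$ onto a proper quotient. (Your verification ``$(ac)^2=(dfdf)^2=(df)^4=1$'' is vacuously true because $ac=dfdf=(df)^2=1$ already.) You also correctly noticed, a bit later, that $b=e$ cannot be fixed by $f$-conjugation since $(ef)^4=1$ rather than $(ef)^2=1$; but you did not resolve the resulting bookkeeping.

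The fix, and what the paper does, is to swap the roles: the middle Coxeter node $b$ must go to the unique generator of $R_{5,4}$ that commutes with $f$, namely $d$ (because $(df)^2=1$). Take
$$a\mapsto e,\qquad b\mapsto d,\qquad c\mapsto f^{-1}ef=fef.$$
Then $(ab)^5\mapsto (ed)^5=1$; $(ac)^2\mapsto (e\cdot fef)^2=(ef)^4=1$; and $(bc)^5\mapsto (d\cdot fef)^5=(fd\cdot ef)^5=f(de)^5f=1$ using $df=fd$. Conjugation by $f$ sends $e\mapsto fef$, $d\mapsto d$, $fef\mapsto e$, i.e.\ swaps $a\leftrightarrow c$ and fixes $b$, as required. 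That the resulting map is injective (with image of index $2$) follows from your Gauss--Bonnet area count, which is the right justification. Your geometric picture — choosing a $(2,5,5)$-triangle symmetric about the $f$-mirror — is correct in spirit, but it must be translated into the assignment above, not into $a=d,\ b=e,\ c=fdf$.
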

\begin{proof}
The outer automorphism is clear from the presentation of $R_{5,5}$.
There is a group monomorphism 
$R_{5,5}\rightarrowtail R_{5,4}$
given by sending the generators 
$$
a \mapsto e,\ b \mapsto d,\ c \mapsto f^{-1}ef,
$$
and conjugation of these generators by $f$ has the required effect.
\end{proof}


\begin{theorem}\label{thm:hyperbolic}
For the hyperbolic code $C$ defined in \Cref{eqn:hcode} we have the following:
\begin{align*}
    \Aut(C) &= R_{5,5}/\Gamma, \\
    \Aut(\underlying) &= R_{5,4}/\Gamma.
\end{align*}
Furthermore, $C$ is self-$ZX$-dual as exhibited by
the $ZX$-duality $\tau=f\in R_{5,4}$.
\end{theorem}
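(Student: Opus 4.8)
The plan is to identify $\Aut(\csscode)$ and $\Aut(\underlying)$ with symmetry groups of tilings of the surface $S=\Disc/\Gamma$, and then to recognize $f$ as a $ZX$-duality via the embedding $R_{5,5}\hookrightarrow R_{5,4}$ of \Cref{lem:coxeter_fold}. I will use three standard facts: the full isometry group of the $\{p,q\}$-tiling of $\Disc$ is the $(2,p,q)$-triangle reflection group $R_{p,q}$; a combinatorial automorphism of a cellulated surface is realized by a cellular homeomorphism; and the symmetry group of the induced tiling of a quotient $\Disc/\Gamma$ is the normalizer of $\Gamma$ in the symmetry group upstairs, modulo $\Gamma$. (For $R_{5,5}/\Gamma$ and $R_{5,4}/\Gamma$ even to be groups one needs $\Gamma$ normal in $R_{5,4}$; I take this as part of the construction --- it holds for the examples of interest, such as Bring's code.) With these in hand, $\Aut(\csscode)$ is simply the group of permutations of the faces, edges and vertices of the $\{5,5\}$-tiling of $S$ preserving all incidences, i.e.\ the combinatorial --- hence geometric --- automorphism group of that tiling, which equals $N_{R_{5,5}}(\Gamma)/\Gamma = R_{5,5}/\Gamma$.

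For $\Aut(\underlying)$, note that the underlying classical code records exactly the bipartite incidence graph $B$ between the edges of the $\{5,5\}$-tiling (the qubits) and the disjoint union of its vertices and faces (the checks), with $e\sim v$ iff $v\in\del e$ and $e\sim F$ iff $e\in\del F$. The key geometric observation --- and, I expect, the main obstacle to make fully rigorous --- is that $B$ is the vertex--face incidence structure of the rectified tiling $r\{5,5\}$, which as an unlabelled tiling is the $\{5,4\}$-tiling: its vertices are the edge midpoints of $\{5,5\}$ (degree $4$, each sitting on two old faces and between two old vertices), and its pentagons come in two families, one inside each old face and one around each old vertex. Concretely this identification is delivered by \Cref{lem:coxeter_fold}: the monomorphism $R_{5,5}\rightarrowtail R_{5,4}$ carries the $\{5,5\}$-cell stabilizers $\langle a,c\rangle$, $\langle a,b\rangle$, $\langle b,c\rangle$ onto (conjugates of) the vertex- and face-stabilizers of $\{5,4\}$, so cosets in $\Gplus$ of the former correspond to cells of the $\{5,4\}$-tiling of $S$. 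Since one part of $B$ has all vertices of degree $4$ and the other of degree $5$, no automorphism of $B$ swaps the two parts; and since the $\{5,4\}$-edges are recoverable from the vertex--face incidence (adjacent $\{5,4\}$-vertices being those lying on exactly two common faces), $\Aut(\underlying)=\Aut(B)$ equals the symmetry group of the $\{5,4\}$-tiling of $S$, which by the argument of the first paragraph is $R_{5,4}/\Gamma$.

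It remains to see that $\tau=f$ is a $ZX$-duality. By \Cref{lem:coxeter_fold}, conjugation by $f$ is the outer automorphism of $R_{5,5}$ swapping $a\leftrightarrow c$ and fixing $b$; in particular $f$ normalizes $R_{5,5}$ but $f\notin R_{5,5}$ (an inner automorphism could not be this outer one). Under the dictionary above this means $f$ interchanges the vertex-type and face-type cells of $\{5,5\}$ while preserving the edge-type cells: in the $\{5,4\}$-picture the $X$-sector (faces of $\{5,5\}$) and the $Z$-sector (vertices of $\{5,5\}$) are precisely the two $R_{5,5}$-orbits of $\{5,4\}$-pentagons, and left multiplication by $f$ interchanges them, while the qubits form a single $R_{5,5}$-orbit of $\{5,4\}$-vertices that $f$ maps into itself. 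Hence, with respect to the splitting of the check space into its $X$- and $Z$-parts, $f$ acts by a block-antidiagonal permutation $\tau_r=\left(\begin{smallmatrix}0&\tau_Z\\ \tau_X&0\end{smallmatrix}\right)$, which is exactly the condition for $f\in\zxdualities$; thus $\csscode$ is self-$ZX$-dual. The delicate points in carrying this out are the bookkeeping in \Cref{lem:coxeter_fold} that pins down which $R_{5,5}$-orbit of pentagons is the $X$-sector rather than the $Z$-sector, and confirming the edge-recoverability claim used for $\Aut(\underlying)$; both become routine once the rectification picture is set up.
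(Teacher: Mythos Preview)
Your proposal is correct and follows essentially the same approach as the paper's (sketch) proof: both use \Cref{lem:coxeter_fold} to realize $R_{5,5}$ as an index-$2$ subgroup of $R_{5,4}$, read off the automorphism groups from the coset/tiling structure in the universal cover, and descend to $S=\Disc/\Gamma$ via the normality and free action of $\Gamma$. Your rectification picture $r\{5,5\}=\{5,4\}$ is exactly the dual description of the paper's observation that the reflection $f$ bisects each $(\pi/2,\pi/5,\pi/5)$ fundamental triangle of $R_{5,5}$ into two $(\pi/2,\pi/4,\pi/5)$ triangles of $R_{5,4}$, and you make explicit several bookkeeping steps (the bipartite degree argument, edge-recoverability, the assumption that $\Gamma\trianglelefteq R_{5,4}$) that the paper's sketch leaves implicit.
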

\begin{proof}
(Sketch.)
By \Cref{lem:coxeter_fold} we can think of $R_{5,5}$ as a subgroup of $R_{5,4}$, where the latter has an additional generator~$f$.
The reflection~$f$ divides each of the fundamental triangles
of~$R_{5,5}$ with internal angles $\pi/2$, $\pi/5$, $\pi/5$
into two triangles with internal angles $\pi/2$, $\pi/4$,
$\pi/5$; see the shaded region in~\Cref{fig:poincare-disc-524,fig:poincare-disc-525}.

By studying the cosets which give rise to~$C$ and~$\underlying$ in the universal cover,
and noting that 
$\Gamma$ is normal and acts fixed-point free,
our arguments regarding the cosets are preserved under the covering map induced by
$\Gamma$ onto the finite surface.
\end{proof}

Using this result and \Cref{lem:zx-duality}
we obtain all the $ZX$-dualities of our hyperbolic code $C$
as
$$
    \zxdualities = f R_{5,5}/\Gamma.
$$

A similar result holds for other symmetric Coxeter
reflection groups $R_{l,l}$ for $l\ge 3$.
There is always a subgroup inclusion $R_{l,l}\rightarrowtail R_{l,4}$ 
coming from the outer automorphism of $R_{l,l}.$
In the case of $R_{3,3}$ this group is $\operatorname{S}_4$
the (finite) symmetry group of the tetrahedron, and $R_{3,4}$ is the
symmetry group of the cube.
The inclusion $R_{3,3}\rightarrowtail R_{3,4}$ corresponds to the
inclusion of a tetrahedron inside a cube.
When $l=4$, we get the group $R_{4,4}$ which is
the (infinite) symmetry group of the square two-dimensional lattice.
The inclusion  $R_{4,4}\rightarrowtail R_{4,4}$ exhibits
an inclusion of this lattice into a $1/\sqrt{2}$ scaled and rotated copy of the same lattice.

\begin{figure*}[t]
\centering
\begin{subfigure}[b]{0.3\textwidth}
\includegraphics[width=\textwidth]{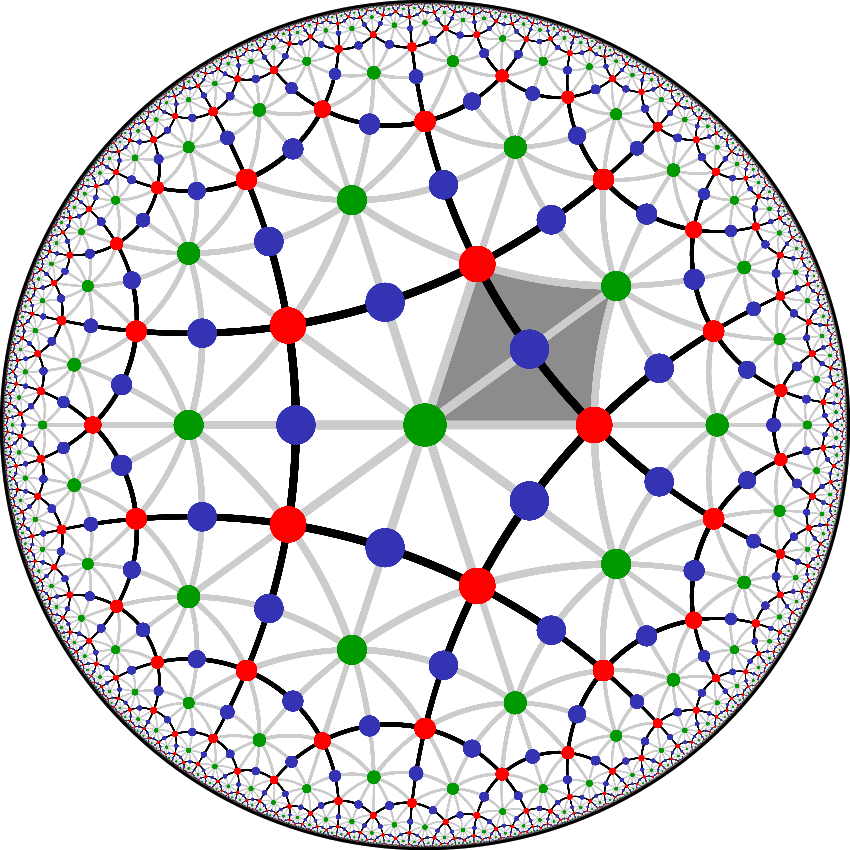}
\caption{
The $\{5,4\}$-tiling of the Poincar\'e hyperbolic disc model.
Vertices, edges and faces are marked with red, blue and green dots respectively.
This tiling has symmetry group the Coxeter reflection group $R_{5,4}$.
}
\label{fig:poincare-disc-524}
\end{subfigure}
~
\begin{subfigure}[b]{0.3\textwidth}
\includegraphics[width=\textwidth]{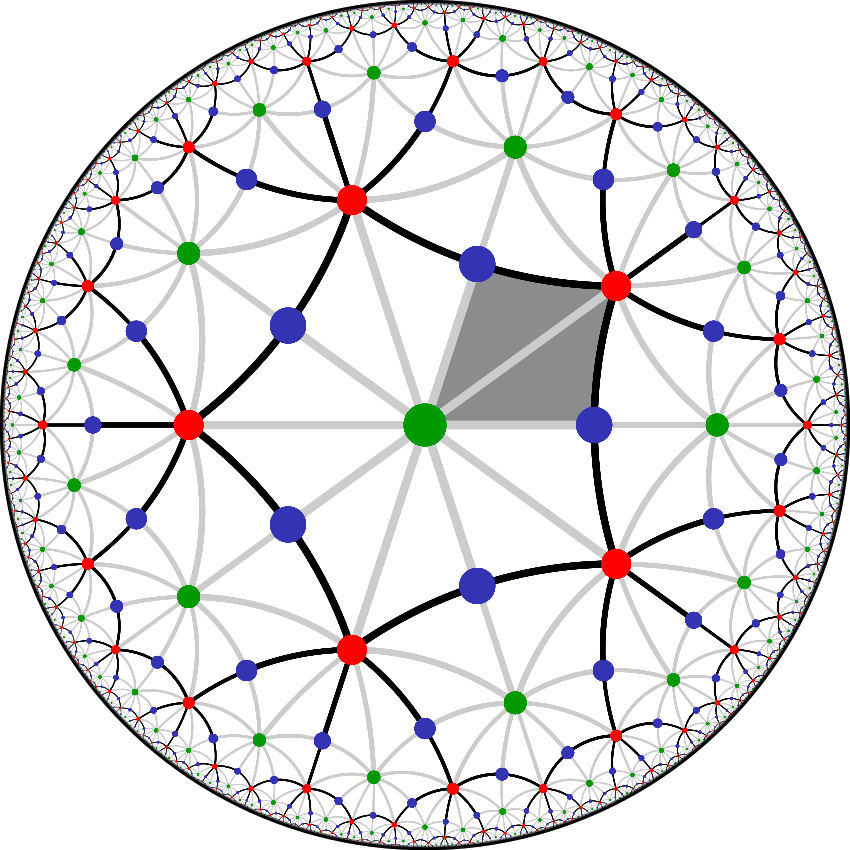}
\caption{
The $\{5,5\}$-tiling of the Poincar\'e hyperbolic disc model.
Vertices, edges and faces are marked with red, blue and green dots respectively.
This tiling has symmetry group the Coxeter reflection group $R_{5,5}$.
}
\label{fig:poincare-disc-525}
\end{subfigure}
~
\begin{subfigure}[b]{0.3\textwidth}
\includegraphics[width=\textwidth]{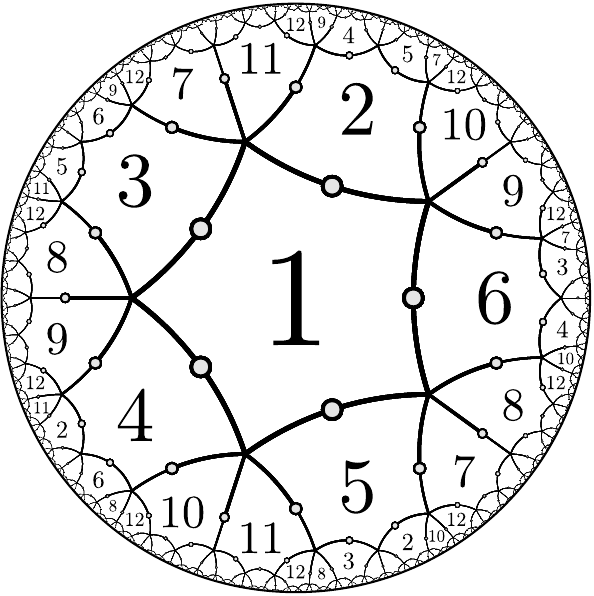}
\caption{
Bring's curve as a quotient of the $\{5,5\}$-tiled hyperbolic disc.  
Faces are identified according to the number scheme.
Each edge is associated with a qubit, shown with a grey circle,
and faces/vertices define $Z$/$X$-check operators.
}
\label{fig:brings-curve}
\end{subfigure}
\caption{
Hyperbolic codes are defined using the homology of
compact hyperbolic surfaces. We construct these tiled surfaces
as quotients of the $\{5,5\}$-tiling (ii), for example Bring's code (iii).
We find $ZX$-dualities of the hyperbolic codes by 
forgetting the distinction between vertices and faces in (ii), which yields
the $\{5,4\}$-tiling (i). The same quadrilateral region is shaded in (i) and (ii), 
demonstrating this relation between the Coxeter reflection groups.
}
\label{fig:poincare-disc}
\end{figure*}

\subsection{Fold-Transversal Clifford Gates}\label{sec:fold_transversal_gates}
Some of the $ZX$-dualities in $\zxdualities$ can be used
to implement logical gates using circuits of low, constant depth.

Given a $ZX$-duality $\tau$, a unitary operator
on the $n$ qubits is called \emph{fold-transversal}
when it is a tensor product of single qubit unitaries, and
two-qubit unitaries with support on the orbits 
$\{(i,\tau(i))\}_{i=1,...,n}$
of $\tau.$

A unitary operator on the $n$ qubits of a CSS code $\csscode$ 
is called an \emph{encoded logical gate} when it
commutes with the codespace projector.

\subsubsection{Hadamard-type fold-transversal gates}

The simplest such gate is of what we call \emph{Hadamard-type},
meaning it exchanges logical $X$-operators with logical $Z$-operators.
Given a $ZX$-duality $\tau \in \zxdualities$ we define
this as the following fold-transversal operator on the $n$ physical bits:
\begin{align*}
	H_\tau = \bigotimes_{\substack{i=1,..,n\\i<\tau(i)}} 
        \operatorname{SWAP}_{i,\tau(i)}\ \bigotimes_{i=1}^n H_i 
\end{align*}
\begin{theorem}
Given a $ZX$-duality $\tau$ for a CSS code $\csscode$
then $H_\tau$ is an encoded logical gate.
\end{theorem}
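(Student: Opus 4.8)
The plan is to show that $H_\tau$ preserves the stabilizer group $\mathcal{S}$ of $\csscode$ under conjugation, $H_\tau\,\mathcal{S}\,H_\tau^\dagger=\mathcal{S}$; this immediately gives $H_\tau\,\Pi\,H_\tau^\dagger=\Pi$ for the codespace projector $\Pi=|\mathcal{S}|^{-1}\sum_{S\in\mathcal{S}}S$, and hence $H_\tau\Pi=\Pi H_\tau$, which is the claim. Since conjugation is a group homomorphism, it suffices to check the stabilizer generators, i.e.\ the $X$-type checks $X^{v}$ for $v$ a row of $H_X$ and the $Z$-type checks $Z^{w}$ for $w$ a row of $H_Z$; equality of the two groups then follows because conjugation is injective and $\mathcal{S}$ is finite. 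In the homological language of \Cref{sec:css}, the $X$-checks span $\im H_X^\top$ and the $Z$-checks span $\im H_Z^\top$ inside $C_1=\vspac{n}$, so we must show that $H_\tau$ maps each of these spans into the other.

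First I would factor $H_\tau=P\,W$, where $W=\bigotimes_i H_i$ is the transversal Hadamard and $P=\prod_{i<\tau(i)}\operatorname{SWAP}_{i,\tau(i)}$ is the qubit‑permutation unitary implementing $\tau_n$ on $C_1$ (for a genuine fold $\tau_n$ is an involution, so $P$ is a product of disjoint transpositions, unambiguous, and commutes with $W$). Conjugation by $W$ interchanges the two sectors while leaving support vectors untouched, $X^{v}\mapsto Z^{v}$ and $Z^{w}\mapsto X^{w}$; conjugation by $P$ acts as $X^{u}\mapsto X^{\tau_n u}$ and $Z^{u}\mapsto Z^{\tau_n u}$. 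No phases or $Y$‑operators ever arise, so the entire content of the theorem reduces to the two inclusions $\tau_n(\im H_Z^\top)\subseteq \im H_X^\top$ and $\tau_n(\im H_X^\top)\subseteq \im H_Z^\top$.

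These are exactly what the commuting square defining a $ZX$-duality delivers. Writing that square as $\tau_r H=H\tau_n$ with $H=\left(\begin{smallmatrix}H_X\\ H_Z\end{smallmatrix}\right)$ and the block form $\tau_r=\left(\begin{smallmatrix}0&\tau_Z\\ \tau_X&0\end{smallmatrix}\right)$, the two block rows read $\tau_Z H_Z=H_X\tau_n$ and $\tau_X H_X=H_Z\tau_n$. Transposing, using $\tau_n^\top=\tau_n^{-1}$ and the invertibility of the permutation matrices $\tau_X,\tau_Z$ (note a $ZX$-duality forces $r_X=r_Z$), gives $\tau_n\,\im H_Z^\top=\im H_X^\top$ and $\tau_n\,\im H_X^\top=\im H_Z^\top$. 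Chaining the two conjugations, $H_\tau X^{v} H_\tau^\dagger=Z^{\tau_n v}\in\mathcal{S}$ and $H_\tau Z^{w} H_\tau^\dagger=X^{\tau_n w}\in\mathcal{S}$, so $H_\tau\,\mathcal{S}\,H_\tau^\dagger=\mathcal{S}$.

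The argument is essentially bookkeeping, so there is no deep obstacle; the points demanding care are the orientation of the commuting diagram together with the transposes that relate rows of $H_X,H_Z$ to the column action of the permutation matrix $\tau_n$, and — if one wants the statement for a $ZX$-duality whose qubit permutation is not involutive — assigning a well‑defined meaning to the ordered product of possibly overlapping transpositions defining $P$. For the folds of interest, such as $\tau=f$ in \Cref{thm:hyperbolic}, one has $\tau_n^2=\id$ and this last subtlety does not occur.
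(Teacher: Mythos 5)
Your proof is correct and follows the same route as the paper: show that conjugation by $H_\tau$ permutes the stabilizer group $\mathcal{S}$, and deduce that the codespace projector is preserved. Where the paper's proof is a two-line sketch appealing to ``the fact that $\tau$ maps between $X$- and $Z$-checks,'' you have supplied the underlying bookkeeping — factoring $H_\tau=PW$, reading off the block rows of the commuting square $\tau_r H=H\tau_n$ to obtain $\tau_n(\im H_Z^\top)=\im H_X^\top$ and $\tau_n(\im H_X^\top)=\im H_Z^\top$, and noting that a $ZX$-duality forces $r_X=r_Z$. You also rightly flag a subtlety the paper elides: the product of $\operatorname{SWAP}$ gates defining $H_\tau$ is only unambiguous when the transpositions are disjoint, i.e.\ when $\tau_n$ is an involution, which is the case for the dualities actually exploited later.
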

\begin{proof}
The operator $H_\tau$ acts on elements of the Pauli group $\mathcal{P}_n$ via conjugation.
Using the identities $H_i X_i H_i^\dagger = Z_i$ and $H_i Z_i H_i^\dagger = X_i$, as well as the fact that $\tau$ maps between $X$- and $Z$-checks, we see that $H_\tau$ leaves the stabilizer group $\mathcal{S}$ invariant.
Thus, the normalizer of the stabilizer group~$N(\mathcal{S})$ must also be invariant.
In other words, $H_\tau$ leaves the code space invariant and maps logical operators onto logical operators.
\end{proof}

Note that $H_\tau$ is generally not simply a logical Hadamard, exchanging pairs of logical Pauli-operators~$\bar{X}_i$ with~$\bar{Z}_i$, as~$\tau$ does not have to be compatible with the choice of logical basis.
More severely, in general, we do not have $H_\tau^2 = I$.
This is only the case if $\tau^2 = \id$.

When we construct $H_\tau$ for the surface code, using the $ZX$-duality $\tau$ shown in \Cref{fig:fold-surface}, we obtain the Hadamard-gate described by Moussa~\cite{moussa2016transversal}.

\subsubsection{Phase-type fold-transversal gates}\label{sec:phase_type_gates}

Given a $ZX$-duality $\tau\in\zxdualities$ we define the following
fold-transversal operator on the $n$ physical qubits:
\begin{align*}
S_\tau = \bigotimes_{\substack{i=1,...,n\\i=\tau(i)}} S_i^{(\dagger)}\ 
    \bigotimes_{\substack{i=1,...,n\\i < \tau(i)}} \operatorname{CZ}_{i,\tau(i)}
\end{align*}
for some choice of $S_i$ and $S_i^\dagger$ on the fixed points $i=\tau(i).$
We call this a \emph{phase-type} gate.

\begin{theorem}\label{thm:S_type_gate}
Let $\tau \in \zxdualities$ be a $ZX$-duality for a CSS code $\csscode$,
which is
self-inverse ($\tau^2=\id$) and whose set of stabilized
qubits is even, i.e. $|\{ i=1,...,n \mid \tau(i)=i \}| \equiv 0 \mod 2$.
Further, suppose that each $X$-check~$X^{\otimes s}$
with support $s\subset \{1,\dotsc,n\}$ has (a) even overlap
with the set of invariant qubits $\{i=1,...,n  \mid \tau(i)=i\}$ and 
(b) contains an even number of two element orbits 
$\{i,\tau(i)\}$, where $i\neq \tau(i)$.
Then there exists a phase-type gate~$S_\tau$ that is
an encoded logical gate for $\csscode$.
\end{theorem}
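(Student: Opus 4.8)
The plan is to show that $S_\tau$ acts on the Pauli group in a way that leaves the stabilizer group $\mathcal{S}$ invariant, using the same logic as in the Hadamard-type theorem: once $S_\tau \mathcal{S} S_\tau^\dagger = \mathcal{S}$, it follows that $S_\tau$ preserves $N(\mathcal{S})$ and hence is an encoded logical gate. The nontrivial point is that $S$ and $\operatorname{CZ}$, unlike $H$, do not simply swap $X$- and $Z$-sectors; conjugation by $S$ sends $X \mapsto Y = iXZ$ and fixes $Z$, while $\operatorname{CZ}_{i,j}$ sends $X_i \mapsto X_i Z_j$, $X_j \mapsto Z_i X_j$, and fixes $Z_i, Z_j$. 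So $S_\tau$ fixes every $Z$-check pointwise, and the real content is tracking what happens to an $X$-check.

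First I would fix an $X$-check $X^{\otimes s}$ with support $s \subset \{1,\dots,n\}$ and compute $S_\tau (X^{\otimes s}) S_\tau^\dagger$ factor by factor. For a fixed point $i = \tau(i)$ lying in $s$, the factor $S_i^{(\dagger)}$ contributes a $Z_i$ and a phase $\pm i$. For a two-element orbit $\{i,\tau(i)\}$ with $i \neq \tau(i)$: if both $i$ and $\tau(i)$ lie in $s$, then $\operatorname{CZ}_{i,\tau(i)}$ contributes $Z_i Z_{\tau(i)}$ (no net phase, since the two $\operatorname{CZ}$ actions combine cleanly); if exactly one of them, say $i$, lies in $s$, then $\operatorname{CZ}_{i,\tau(i)}$ attaches a $Z_{\tau(i)}$. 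The key structural fact I need is that because $\tau$ is a $ZX$-duality, $\tau(s)$ is the support of a $Z$-check $Z^{\otimes \tau(s)}$, i.e. a row of $H_Z$. I would verify that the collection of extra $Z$'s produced — a $Z$ on every fixed point in $s$, a $Z_iZ_{\tau(i)}$ on every orbit fully inside $s$, and a $Z_{\tau(i)}$ for each orbit meeting $s$ in one point — assembles exactly into the $Z$-operator supported on $\tau(s)$, possibly times the $Z$-operator supported on $s$ itself (the latter appearing when an orbit lies entirely in $s$, or when $\tau$ fixes a point of $s$, since then that $Z_i$ has $i \in \tau(s)$ too). Both $Z^{\otimes s \cap \tau(s)}$-type bookkeeping terms are products of columns of $H_Z^\top$ composed with the hypothesis that $\tau$ preserves the check structure, so the resulting operator is $\pm Z^{\otimes \tau(s)}$ up to an element of $\mathcal{S}_Z$ — in particular an element of $N(\mathcal{S})$, and after multiplying by stabilizers, a genuine stabilizer-times-phase.

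The two hypotheses (a) and (b) enter precisely to kill the phase. Collecting all the $\pm i$ factors from the $S_i^{(\dagger)}$ on fixed points in $s$, and any $-1$'s from reordering Paulis, the total phase is $i^{(\text{number of fixed points in } s)} \cdot (-1)^{(\text{number of two-element orbits in } s)} \cdot (\pm 1)$. Condition (a) says the fixed-point count is even, so $i^{\text{even}} = \pm 1$; condition (b) says the orbit count is even, so that factor is $+1$. What remains is a global $\pm 1$ that is the same for every $X$-check and can be absorbed, and — crucially — by choosing the daggers $S_i$ versus $S_i^\dagger$ on the fixed points appropriately (here the evenness of the total fixed-point set, not just the overlap with one check, is used so that a consistent global choice exists), one arranges $i^{(\text{fixed points in } s)} = +1$ for every check simultaneously. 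Thus $S_\tau (X^{\otimes s}) S_\tau^\dagger \in \mathcal{S}$ for each $X$-check, $S_\tau$ normalizes $\mathcal{S}$, and we are done.

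I expect the main obstacle to be the careful sign and phase accounting in the previous paragraph: showing that the $-1$'s arising from commuting the freshly-created $Z$-operators past each other and past the original $X$-operators, combined with the $\pm i$ from the $S$-gates, collapse to a single controllable global phase, and that the evenness hypotheses are exactly what is needed — no more, no less. The homological content (that the extra $Z$-strings form a valid stabilizer) is comparatively routine once one unwinds that $\tau$ maps rows of $H_X$ to rows of $H_Z$; it is the $\mathbb{Z}_4$-level phase tracking, rather than the $\mathbb{F}_2$-level support tracking, that requires care.
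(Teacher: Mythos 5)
Your overall strategy matches the paper's: show $S_\tau$ normalizes $\mathcal{S}$ by checking it fixes $Z$-checks (diagonal) and sends each $X$-check $X^{\otimes s}$ to a phase times $X^{\otimes s}\,Z^{\otimes\tau(s)}$, with $Z^{\otimes\tau(s)}\in\mathcal{S}_Z$ because $\tau$ carries $X$-checks to $Z$-checks, and then use (a), (b) to kill the phase. Your aside that the resulting operator might be $Z^{\otimes\tau(s)}$ ``possibly times $Z^{\otimes s}$'' is wrong and should be dropped: since $\tau^2=\id$, every $Z$ that appears lands on a qubit in $\tau(s)$ and each such qubit receives exactly one $Z$, so the $Z$-string is exactly $Z^{\otimes\tau(s)}$ with no extra factor.

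The genuine gap is in your phase accounting on the fixed points. You write the fourth-root-of-unity contribution as $i^{(\#\text{fixed points of }\tau\text{ in }s)}$, i.e.\ $i^{a+b}$ where $a$, $b$ are the numbers of $S$- and $S^\dagger$-sites hit by $s$; but the actual phase is $i^{a}(-i)^{b}=i^{a-b}$. These are \emph{not} the same quantity, and in particular $i^{a+b}$ is fixed by $s$ and cannot be manipulated by redistributing daggers, whereas $i^{a-b}$ is exactly what the dagger choice controls. Your appeal to ``the evenness of the total fixed-point set'' as the thing making a consistent global dagger-choice possible is not what the paper uses: the paper's proof partitions the fixed-point qubits into two sets $A$ (sites with $S$) and $B$ (sites with $S^\dagger$) in such a way that \emph{each} $X$-check meets $A$ and $B$ in equal numbers, $|s\cap A|=|s\cap B|$, whence $i^{a-b}=i^0=1$ for every check at once. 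Condition (a) is invoked for the existence of this balanced partition, and the global evenness hypothesis is separate. Without pinpointing that the relevant exponent is $a-b$ rather than $a+b$, and that the correct choice is a per-check balanced partition, the phase argument doesn't close; this is precisely the ``$\mathbb{Z}_4$-level tracking'' you flagged as the delicate point, and it is where your proposal stays vague.
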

\begin{proof}
Due to condition (a) there exists a partition of the
set of invariant qubits of~$\tau$ into~$A$ and~$B$ such
that each $X$-check is half supported on~$A$ and half supported on~$B$.
We define the following unitary operator on the $n$ physical bits:
\begin{align*}
S_\tau = \bigotimes_{i\in A} S_i\ \bigotimes_{j\in B} S_j^\dagger\ 
    \bigotimes_{\substack{i=1,...,n\\i < \tau(i)}} \operatorname{CZ}_{i,\tau(i)}
\end{align*}
Let us verify that~$S_\tau$ leaves the stabilizer group~$\mathcal{S}$ invariant.
We do this by showing that~$S_\tau$ sends check operators to other elements
of~$\mathcal{S}$. By virtue of~$S_\tau$ being Clifford, this then extends to
a permutation of~$\mathcal{S}$.
We first note that as~$S_\tau$ is diagonal in the computational basis, it commutes with all $Z$-checks.
Consider an $X$-check $X^{\otimes s}$ with support $s\subset \{1,\dotsc,n\}$.
First, let us assume that~$s$ does not contain qubits invariant under~$\tau$, i.e.~$s\cap (A\cup B) = \emptyset$.
Note that for single-qubit Pauli-$X$ operators we have that $$\operatorname{CZ}_{i,j} X_i \operatorname{CZ}_{i,j}^\dagger = X_i Z_j .$$
For any pair of qubits $i$ and $j=\tau(i)$ we have that $$CZ_{i,j} X_i X_j CZ_{i,j} =
- X_i X_j Z_i Z_j .$$
Due to condition (b) the total number of minus signs that we pick up is even.
Hence, the operator $X^{\otimes s}$ is mapped by $S_\tau$ onto the operator $X^{\otimes s} Z^{\otimes \tau(s)}$.
By virtue of~$\tau$ mapping $X$-checks onto $Z$-checks, we know that $Z^{\otimes \tau(s)}\in \mathcal{S}$.
Now, let us assume that $s\cap (A\cup B) \neq \emptyset$.
For Pauli-$X$ operators we have that $S_i X_i S_i^\dagger = i X_iZ_i$ and $S_i^\dagger X_iS_i = -i X_i Z_i$.
The check~$X^{\otimes s}$ is mapped onto $$i^{|s\cap A|}(-i)^{|s\cap B|} X^{\otimes s} Z^{\otimes \tau(s)}.$$
Each phase gate $S_i$ ($S_i^\dagger$) introduces an unwanted factor of $i$ ($-i$).
However, by assumption on $\tau$, we are guaranteed that $|s\cap A| = |s\cap B|$, so that these factors cancel out.
\end{proof}

As for the Hadamard-type gates, we have shown that the phase-type gate $S_\tau$ leaves the code space invariant and maps logical operators onto logical operators.
We stress again that the action of $S_\tau$ will generally be different from applying a logical $\bar{S}_i$ on each logical qubit.
However, we do have $S_\tau^4 = I$ by construction. 

When we construct $S_\tau$ for the surface code, using the $ZX$-duality $\tau$ shown in \Cref{fig:fold-surface}, we obtain the $S$-gate described by Moussa~\cite{moussa2016transversal}.
Note there is another $ZX$-duality that rotates the surface code by 90 degrees.
However, this rotation fixes a single physical qubit
and so we cannot construct a phase-type encoded logical gate from this $ZX$-duality.

The next result shows we can get a phase-type encoded logical
gate by \emph{stacking} two copies of a self-$ZX$-dual code.

\begin{corollary}\label{thm:stacking}
Let $\tau \in \zxdualities$ be a $ZX$-duality for a CSS code $\csscode$ which is self-inverse,
then $\csscode\oplus\csscode$ has phase-type encoded logical gate:
\begin{align*}
S_\tau = \bigotimes_{i=1,...,n} \operatorname{CZ}_{i,n+\tau(i)}.
\end{align*}
\end{corollary}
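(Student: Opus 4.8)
The plan is to reduce this corollary to \Cref{thm:S_type_gate} by verifying that the fixed-point-free $ZX$-duality on $\csscode\oplus\csscode$ constructed in \Cref{lem:stacking} satisfies all the hypotheses of that theorem. Write $\tau'$ for the $ZX$-duality of $\csscode\oplus\csscode$ given by $\tau\oplus\tau$ followed by the swap of the two summands, so that $\tau'$ sends the qubit $i$ in the first copy to the qubit $\tau(i)$ in the second copy, and the qubit $i$ in the second copy to the qubit $\tau(i)$ in the first copy.

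First I would check that $\tau'$ is self-inverse: applying $\tau'$ twice sends qubit $i$ in the first copy to $\tau(\tau(i)) = i$ in the first copy, using $\tau^2 = \id$, and similarly for the second copy; so $(\tau')^2 = \id$. Next, $\tau'$ has \emph{no} fixed points at all (this is exactly the content of \Cref{lem:stacking}), so the set of $\tau'$-invariant qubits is empty, which has size $0 \equiv 0 \bmod 2$, and condition (a) of \Cref{thm:S_type_gate} holds vacuously — every $X$-check has empty, hence even, overlap with the empty set of invariant qubits. It remains to check condition (b): every $X$-check of $\csscode\oplus\csscode$ must contain an even number of two-element orbits $\{j,\tau'(j)\}$. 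The $X$-checks of $\csscode\oplus\csscode$ are precisely the rows of $H_X\oplus H_X$, i.e. each $X$-check is an $X$-check $X^{\otimes s}$ of $\csscode$ supported entirely in one of the two copies. Such a check has support $s$ in (say) the first copy, and its orbit structure under $\tau'$ consists of the pairs $\{i_{\text{copy }1}, \tau(i)_{\text{copy }2}\}$ for $i\in s$ — there are exactly $|s|$ such pairs, each a genuine two-element orbit since $\tau'$ is fixed-point free. So condition (b) asks that $|s|$ be even for every $X$-check of $\csscode$.

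Here is where I expect the main subtlety. Condition (b) is \emph{not} automatic: it requires that every $X$-check of $\csscode$ have even weight. I would handle this by observing that the commutation relation $H_X H_Z^\top = 0$ together with the assumption that $\csscode$ is self-$ZX$-dual forces this — more carefully, since $\tau$ is a $ZX$-duality, each $X$-check of $\csscode$ is mapped to a $Z$-check, and one can pair up $X$-checks with $Z$-checks of the same weight; but evenness of individual check weights still needs an argument. The cleanest route is probably to not insist on the full strength of condition (b) for $\csscode$ separately, but instead to re-derive the stabilizer-preservation directly for the explicit operator $S_\tau = \bigotimes_{i} \operatorname{CZ}_{i,n+\tau(i)}$: since $S_\tau$ is diagonal it commutes with all $Z$-checks; acting on an $X$-check $X^{\otimes s}$ living in the first copy, each $\operatorname{CZ}_{i,n+\tau(i)}$ with $i\in s$ attaches a $Z$ on qubit $n+\tau(i)$ in the second copy, producing $X^{\otimes s}\otimes Z^{\otimes \tau(s)}$, with no stray phases because $\operatorname{CZ}$ conjugation of a tensor product of $X$'s never produces overlapping $Z$'s across the two disjoint copies (the controls are in copy one, the targets in copy two). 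Since $Z^{\otimes\tau(s)}$ is a $Z$-check of the second copy (as $\tau$ is a $ZX$-duality), the image lies in $\mathcal{S}$; and likewise for an $X$-check in the second copy. By the Clifford property this extends to a permutation of $\mathcal{S}$, so $S_\tau$ fixes the codespace and permutes logical operators, i.e. it is an encoded logical gate. This direct computation sidesteps the weight-parity issue entirely, since the relevant orbit-overlap cancellations are guaranteed by the two summands being on disjoint qubit sets.
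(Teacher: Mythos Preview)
Your direct computation at the end is correct and does prove the corollary. However, the detour through worrying about even-weight $X$-checks stems from a misreading of condition~(b) in \Cref{thm:S_type_gate}. Condition~(b) asks that the support~$s$ of each $X$-check \emph{contain} an even number of two-element orbits~$\{j,\tau'(j)\}$, i.e.\ that both~$j$ and~$\tau'(j)$ lie in~$s$. In the stacked code, every $X$-check lives entirely in one copy, while every $\tau'$-orbit has exactly one element in each copy; hence no orbit is contained in any such~$s$, and the count is~$0$, which is even. You instead counted the orbits that \emph{intersect}~$s$ (which is indeed~$|s|$), and this led you to the spurious even-weight requirement.

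The paper's proof is exactly what your first three steps set up: the $ZX$-duality~$\tau'$ from \Cref{lem:stacking} is self-inverse (using $\tau^2=\id$) and has no fixed points, so both conditions~(a) and~(b) of \Cref{thm:S_type_gate} hold vacuously, and the explicit form of~$S_{\tau'}$ reduces to $\bigotimes_i \operatorname{CZ}_{i,n+\tau(i)}$ since every index $i\le n$ satisfies $i<\tau'(i)=n+\tau(i)$. Your direct verification is a valid alternative argument, but it is not needed.
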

\begin{proof}
We use the $ZX$-duality defined in \Cref{lem:stacking},
and check that this duality satisfies the condition of the
previous theorem.
Indeed, this $ZX$-duality has no fixed points, is self-inverse,
and the orbits intersect the support of an $X$-check on at most one qubit.
\end{proof}

	\subsection{The Symplectic Group}
	Instead of dealing with unitary Clifford operators directly, we consider a well-known relation to the \emph{symplectic group}~$\operatorname{Sp}_{2n}(\mathbb{F}_2)$.
	
	By definition, we have that the Pauli group~$\mathcal{P}_n$ is a normal subgroup of the Clifford group~$\mathcal{C}_n$.
	The quotient group~$\mathcal{C}_n / \mathcal{P}_n$ is isomorphic to the symplectic group~$\operatorname{Sp}_{2n}(\mathbb{F}_2)$~\cite{bolt1961clifford}.
	In other words, the following short sequence is exact:
	$$
	\begin{tikzcd}
		\mathcal{P}_n  \arrow[r, tail, "\iota"] & \mathcal{C}_n  \arrow[r, two heads, "\pi"] & \operatorname{Sp}_{2n}(\mathbb{F}_2)
	\end{tikzcd}
	$$
	where~$\iota$ is the canonical embedding map and~$\pi$ the quotient map.
	This implies that any element of the Clifford group $U\in \mathcal{C}_n$ can be specified by an element of the symplectic group $\pi(U)\in \operatorname{Sp}_{2n}(\mathbb{F}_2)$ and a representative of the Pauli group $P\in \mathcal{P}_n$.
	This fact is well-known and used e.g. for sampling random elements of the Clifford group~\cite{koenig2014efficiently}.
	Note that we can trivially implement logical Pauli operators $\bar{P}\in \mathcal{P}_k$ transversally.
	
	From now on, we will use the notation $\tilde{U} := \pi(U)$ for any operator $U\in \mathcal{C}_n$. 
	
%

	\section{Bring's code}\label{sec:brings_code}
	
	We demonstrate our construction on a hyperbolic surface code, called Bring's code, which encodes 8 logical qubits into 30 physical qubits.
	Hyperbolic codes are good candidates for constructing interesting fold-transversal gates, as they are finite-rate and have large symmetry groups~\cite{hyperbolic_codes}.
	One reason for choosing Bring's code in particular is that its symmetries can be visualized as a 3D polyhedron (see~\Cref{fig:great_dodecahedron}).
	Also, it is small enough to analyze the group generated by the fold-transversal gates with a computer algebra system, such as \textsc{GAP}~\cite{GAP4}.
	Larger hyperbolic codes already encode too many logical qubits to do so.
	
	We briefly discuss some further examples, such as hypergraph product codes, balanced product codes and block codes, in \Cref{sec:further_examples}.
	
	\subsection{Construction}\label{sec:construction_bring}

Using the notation from \Cref{sec:surface-codes} 
for Bring's code we define $\Gamma$ equal to the
normal closure in $R^{+}_{5,5}$ of the group generated by $(abcb)^3.$
This single generator is enough to define the quotient of the hyperbolic disc $\Disc$,
see \Cref{fig:brings-curve}.
Somewhat surprisingly, this quotient has an 
immersion into three dimensional space known as the \emph{great dodecahedron}.
This is a Kepler-Poinsot polyhedron derived from the dodecahedron, see \Cref{fig:great_dodecahedron}.
It has pentagonal faces, of which five meet at any vertex.
By identifying the edges of the great dodecahedron with qubits and associating vertices/faces with \mbox{$X$-/}$Z$-checks, we obtain a quantum code with parameters $[[30,8,3]]$.
	
	\begin{figure}
		\centering
		\includegraphics[width=0.8\linewidth]{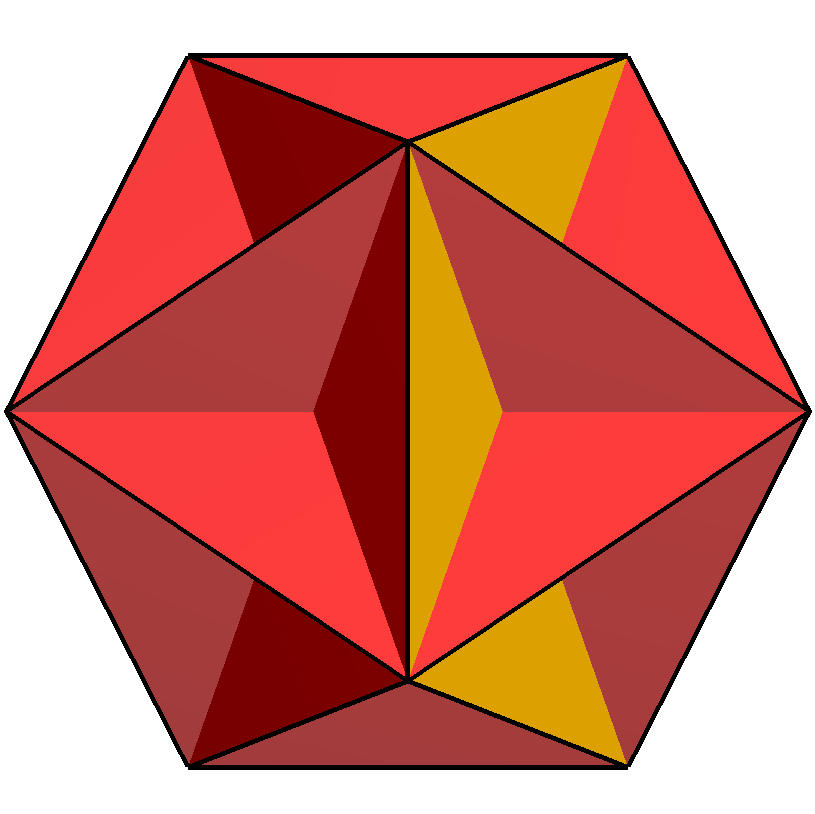}
		\caption{
			The great dodecahedron representation of Bring's code.
			Faces are pentagons which intersect.
			A single pentagonal face is highlighted yellow,
and these correspond to the pentagons in \Cref{fig:brings-curve}.
			Each edge is associated with a qubit and vertices/faces define $X$/$Z$-check operators.
			This figure was made with \textsc{Stella4D}~\cite{webb2003stella}.
		}
		\label{fig:great_dodecahedron}
	\end{figure}
	
From a topological perspective, the great dodecahedron is a tiling (cellulation) of a genus-4 surface,
$S=\Disc/\Gamma.$
This tiling consists of 12 vertices, 30 edges and 12 faces.
We refer to the derived code as \emph{Bring's code},
as the underlying surface is known as \emph{Bring's curve},
named after Erland Samuel Bring, who studied the algebraic
equation that defines the surface as a complex curve.
Bring's code was first defined in~\cite{hyperbolic_codes}
and further analyzed in~\cite{hyperbolic_quantum_storage} and~\cite{conrad2018small}.
	
%

From \Cref{thm:hyperbolic}, the group of symmetries of Bring's code 
is $\Aut(C)=R_{5,5}/\Gamma$ which in this case is found to be the
permutation group~$\operatorname{S}_5$.
In terms of the presentation of~$R_{5,5}$ 
we have,
\begin{align*}
\Aut(C) & = R_{5,5}/\Gamma \\
& = \langle a,b,c \mid  a^2, b^2, c^2, \\
    &\ \ \ \ \  (ab)^5, (ac)^2, (bc)^5, (abcb)^3 \rangle \\
 &= \operatorname{S}_5.
\end{align*}
We also make the definitions $r:=ab$ and $s:=bc$ which are rotations
around a face and vertex, respectively.
Also from \Cref{thm:hyperbolic}, we have the
larger symmetry group which includes symmetries
that exchange vertices and faces,
\begin{align*}
\Aut(\underlying) &= R_{5,4}/\Gamma \\
    &= \operatorname{S}_5\times\operatorname{C}_2.
\end{align*}

	\subsection{Logical bases}\label{sec:bases}
	In the homological representation of CSS codes, the logical $Z$-operators correspond to a basis of the homology group~$\Homology_1$.
	Essential cycles on Bring's surface can be associated to the 20 faces of the icosahedron, which forms the convex hull of the great dodecahedron, see \Cref{fig:great_dodecahedron}.
	First we note that a chain consisting of three edges of a triangle is a closed loop and hence it is clearly a cycle, i.e. an element of $Z_1 = \ker \partial_1$.
	Direct computation shows that it is also not a boundary, i.e. an element of $B_1 = \im \partial_2$.
	In fact, the set of all 20 chains associated to the triangles span the first homology group~$\Homology_1$.
	However, as $k = \dim \Homology_1 = 8$, this spanning set can not be independent.
	Picking a suitable subset of these triangles gives rise to a basis.
	
	\begin{figure}
		\centering
		\includegraphics[width=0.8\linewidth]{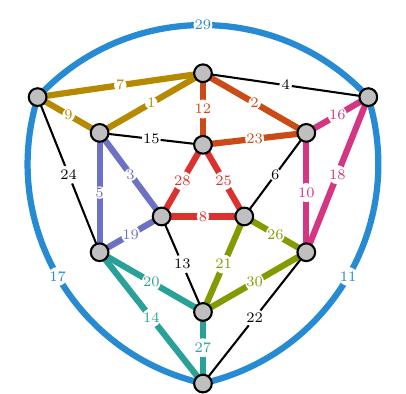}
		\caption{The basis $\mathfrak{B}_{\Homology_1}$ of the first homology group~$\Homology_1$.
			Each element is a cycle of length three shown with thick lines of one color.
			Equivalently, this is a choice of logical $Z$-operators $\bar{Z}_1,\dotsc,\bar{Z}_8$.
			Each triangular face of this planar embedding (of which there are 20, including the outer face) corresponds to a face of the convex hull of the great dodecahedron, which is the icosahedron.}
		\label{fig:logicals_planar}
	\end{figure}
	
	A particularly nice basis is shown in \Cref{fig:logicals_planar}.
	The basis elements correspond to eight faces of the icosahedron.
	These faces are related by the action of the subgroup $\operatorname{C}_2\times \operatorname{C}_2\times \operatorname{C}_2 \subset \operatorname{S}_5$.
	
	\begin{figure}
		\centering
		\includegraphics[width=0.8\linewidth]{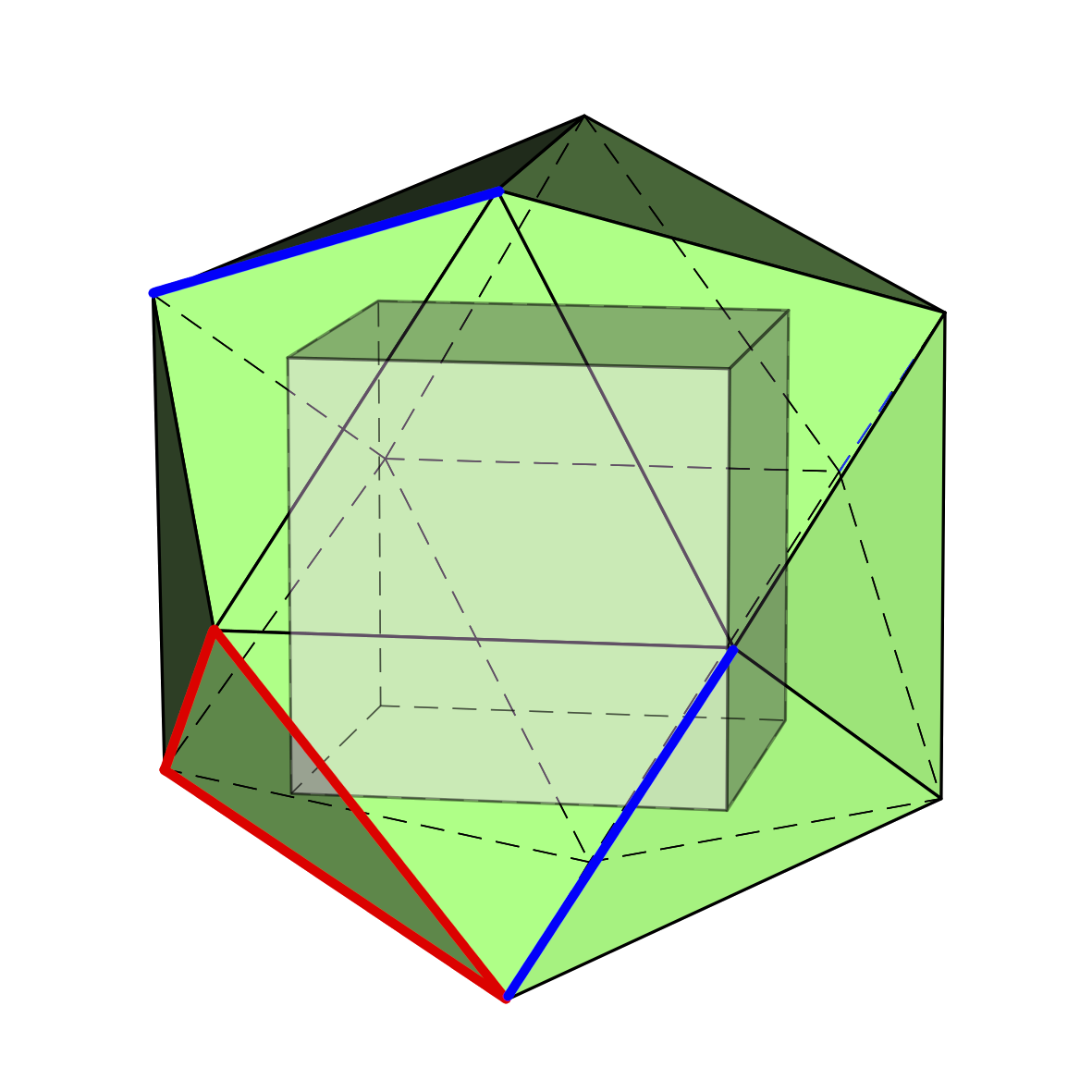}
		\caption{The convex hull of the great dodecahedron is the icosahedron.
			We can place a cube inside the icosahedron such that the eight vertices of the cube lie on the mid-points of eight faces of the icosahedron.
			The symmetries of the cube are also symmetries of the icosahedron.
			The orientation of the cube inside the icosahedron corresponds to a choice of basis in~$\Homology_1$ where each triangle touched by a vertex of the cube is a basis vector, cf.~\Cref{fig:logicals_planar}.
			The basis element of $\Homology_1$ associated with the front, lower-left corner of the cube is highlighted in red and the basis element  of $\Homology^1$ associated with the front, upper-right corner of the cube is highlighted in blue.
		}
		\label{fig:icosahedron_cube}
	\end{figure}
	This choice of basis has a geometric representation that can be seen in \Cref{fig:icosahedron_cube}.
	A single basis element of the homology group is highlighted in red.
	We can place a cube inside the icosahedron, such that eight faces of the icosahedron can be associated with the eight vertices of the cube.
	The group $\operatorname{C}_2\times \operatorname{C}_2\times \operatorname{C}_2$ is a subgroup of the symmetry group of the cube which has a regular action on its vertices.
	It is generated by reflections corresponding to the three planes intersecting the cube in the middle.
	This placement of the icosahedron inside the cube is not unique and different placements correspond to different choices of a basis.
	
	The logical $X$-operators correspond to a basis of the cohomology group~$\Homology^1$.
	Similarly as for the homology basis, we can associate the essential cocycles with triangles of the icosahedron:
	for a given triangle there are three edges (not belonging to the triangle) which are adjacent to exactly one of the triangle's vertices and are not neighboring any of the triangle's edges (see \cref{fig:icosahedron_cube}, highlighted in blue).
	The chain formed by these edges is an essential cocycle.
	By associating eight triangles with the vertices of a cube we obtain a basis of~$\Homology^1$.
	Unfortunately, choosing the bases of $\Homology_1$ and~$\Homology^1$ in this fashion can not give rise to a symplectic basis of $\Homology_1 \oplus \Homology^1$.
	
	In order to represent the logical gates that we are going to construct in terms of symplectic matrices, we need to fix ordered bases of~$\Homology_1$ and~$\Homology^1$.
	We fix the basis~$\mathfrak{B}_{\Homology_1}$ of~$\Homology_1$ as shown in \Cref{fig:logicals_planar} with the order
	$\{1,7,9\}$,
	$\{2,12,23\}$,
	$\{3,5,19\}$,
	$\{10,16,18\}$,
	$\{21,26,30\}$,
	$\{8,25,28\}$,
	$\{11,17,29\}$,
	$\{14,20,27\}$.
	The ordered basis elements of~$\mathfrak{B}_{\Homology^1}$, given in terms of supports on the edges (cf.~\Cref{fig:logicals_planar}), are
	$\{1,6,11\}$,
	$\{2,8,22\}$,
	$\{4,5,25\}$,
	$\{10,13,17\}$,
	$\{24,28,30\}$,
	$\{12,19,26\}$,
	$\{9,16,27\}$,
	$\{15,20,29\}$.
	
	The bases $\mathfrak{B}_{\Homology_1}$ and $\mathfrak{B}_{\Homology^1}$ give rise to a basis  $\mathfrak{B}_{\Homology_1 \oplus \Homology^1}$ of the symplectic vector space~$\Homology_1 \oplus \Homology^1$ in the obvious way.
	The symplectic product between the elements of~$\mathfrak{B}_{\Homology_1 \oplus \Homology^1}$ is given by the matrix $\Phi$ with $\Phi_{i,j} = \langle b^i,b_j \rangle$ with $b^i\in \mathfrak{B}_{\Homology^1}$ and $b_j\in \mathfrak{B}_{\Homology_1}$.
	\begin{align*}
		\Phi = 
		\left(\begin{array}{rrrrrrrr}
		1 & 0 & 0 & 0 & 0 & 0 & 1 & 0 \\
		0 & 1 & 0 & 0 & 0 & 1 & 0 & 0 \\
		0 & 0 & 1 & 0 & 0 & 1 & 0 & 0 \\
		0 & 0 & 0 & 1 & 0 & 0 & 1 & 0 \\
		0 & 0 & 0 & 0 & 1 & 1 & 0 & 0 \\
		0 & 1 & 1 & 0 & 1 & 0 & 0 & 0 \\
		1 & 0 & 0 & 1 & 0 & 0 & 0 & 1 \\
		0 & 0 & 0 & 0 & 0 & 0 & 1 & 1
		\end{array}\right)
	\end{align*}
	As $\Phi$ is not diagonal, $\mathfrak{B}_{\Homology_1 \oplus \Homology^1}$ is not a symplectic basis.

	We can fix this by defining alternative bases as follows:
	Let $b^{'}_i = b_i$ for $i\in \{1,\dotsc,5\}$, $b^{'}_6 = b_2 + b_3 + b_5 + b_6$, $b^{'}_7 = b_1 + b_4 + b_7$ and $b^{'}_8 = b_1 + b_4 + b_7 + b_8$, defining a new basis~$\mathfrak{B}^{'}_{\Homology_1}$ of the first homology group.
	Further, let $b^{'i} = b^i$ for $i\in \{1,\dotsc,5\}$, $b^{'6} = b^2 + b^3 + b^5 + b^6$, $b^{'7} = b^8$ and $b^{'8} = b^1 + b^4 + b^7$, defining a new basis $\mathfrak{B}^{'}_{\Homology^1}$ of the first cohomology group.
	The resulting basis $\mathfrak{B}^{'}_{\Homology_1 \oplus \Homology^1}$ of the combined symplectic space~$\Homology_1 \oplus \Homology^1$ is a symplectic basis.
	
	\subsection{Permutation gates}
	Before we discuss the fold-transversal gate constrictions, we note that some non-trivial gates can be implemented simply by permuting qubits~\cite{grassl2013leveraging,Zeng2011}.
	
	By definition, the automorphisms of Bring's code directly correspond to the symmetries of Bring's surface.
	The subgroup of symmetries of Bring's surface, which maps vertices to vertices and faces to faces, is~$\operatorname{S}_5$ and acts transitively on the 30 edges.
	This action is also faithful, so that we can write the generators $a$, $b$ and $c$ of $\operatorname{S}_5$ in terms of cycles of $\operatorname{S}_{30}$, i.e. cycles permuting the edge labels:
	\begin{widetext}
		\begin{align*}
		a = (2,3)(4,5)(6,8)(7,9)(10,13)(11,14)(12,15)(16,19)(18,20)(21,26)(22,27)(23,28)(24,29)\\
		b = (1,2)(3,6)(4,7)(5,10)(9,16)(11,17)(13,21)(14,22)(15,23)(18,24)(19,26)(20,30)(25,28)\\
		c = (2,4)(3,5)(6,11)(7,12)(8,14)(9,15)(10,18)(13,20)(17,25)(21,27)(22,26)(23,29)(24,28)
		\end{align*}
	\end{widetext}
	This action of $\operatorname{S}_5$ on the edges extends to a linear representation $\rho$ on the space of 1-chains.
	
	Since clearly~$Z_1$ and~$B_1$ are invariant subspaces under any permutation matrix in the image of $\rho$, there exists a representation $\sigma_{\Homology_1} : \operatorname{S}_5 \to \operatorname{GL}(H_1)$ on the first homology group.
	Similarly, $Z^1$ and~$B^1$ are invariant subspaces as well, so that we obtain a representation $\sigma_{\Homology^1} : \operatorname{S}_5 \to \operatorname{GL}(H^1)$ on the first cohomology group.
	See \Cref{sec:permutation_rep} for the matrix representations.

	The group of logical gates induced by permutations of the edges is generated by
	\begin{align*}
	\tilde{\sigma}_x = 
	\left(
	\begin{array}{c|c}
	\tilde{\sigma}_{\Homology_1}(x) & 0 \\ \hline
	0 & \tilde{\sigma}_{\Homology^1}(x)
	\end{array}\right)
	\end{align*}
	where $x \in \{ a,b,c \}$.
	
	As the representation is faithful, we have that $\langle \tilde{\sigma}_a, \tilde{\sigma}_b, \tilde{\sigma}_c \rangle \cong \operatorname{S}_5  < \operatorname{Sp}_{16}(\mathbb{F}_2)$.
	This can be verified using a computer algebra system, such as \textsc{GAP}~\cite{GAP4}.

	\subsection{$ZX$-Dualities}
The group of automorphisms of Bring's code is~$\Aut(C)=\operatorname{S}_5$ which has order $5! = 120$.
Furthermore, Bring's code is self-ZX-dual, by \Cref{thm:hyperbolic},
with self-ZX-duality
$$\tau_0 := f\in R_{5,4}.$$
By \Cref{cor:number_automorphisms} we know that there
are 120 $ZX$-dualities exchanging $X$- and $Z$-checks of Bring's code.
Among these, the number of self-inverse dualities ($\tau^2 = \id$) is~20.
For the implementation of the Hadamard- and phase-type
gates in \Cref{sec:fold_transversal_bring} we will consider
the $ZX$-duality $\tau_0$.

	\Cref{fig:duality} shows the orbits of edges under the action of~$\tau_0$.
	As~$\tau_0$ is self-inverse each edge is either mapped onto a partner ($\tau_0(e_1)=e_2$ and $\tau_0(e_2)=e_1$), or it is fixed ($\tau_0(e)=e$).
	In the figure, the pairs are given the same color, while the fixed edges are drawn in black.
	The same information is shown in \Cref{fig:duality_planar} in the planar layout.
	
	We have picked the ordered bases $\mathfrak{B}_{\Homology_1}$ and~$\mathfrak{B}_{\Homology^1}$ in \Cref{sec:bases} such that $\tau_0$ maps between the pairs of basis elements, i.e. $\tau_0(b_i) = b^i$ for $b_i\in \mathfrak{B}_{\Homology_1}$ and $\tau_0(b^i) = b_i$  for $b^i\in \mathfrak{B}_{\Homology^1}$.

	\begin{figure}[t]
		\centering
		\includegraphics[width=0.8\columnwidth]{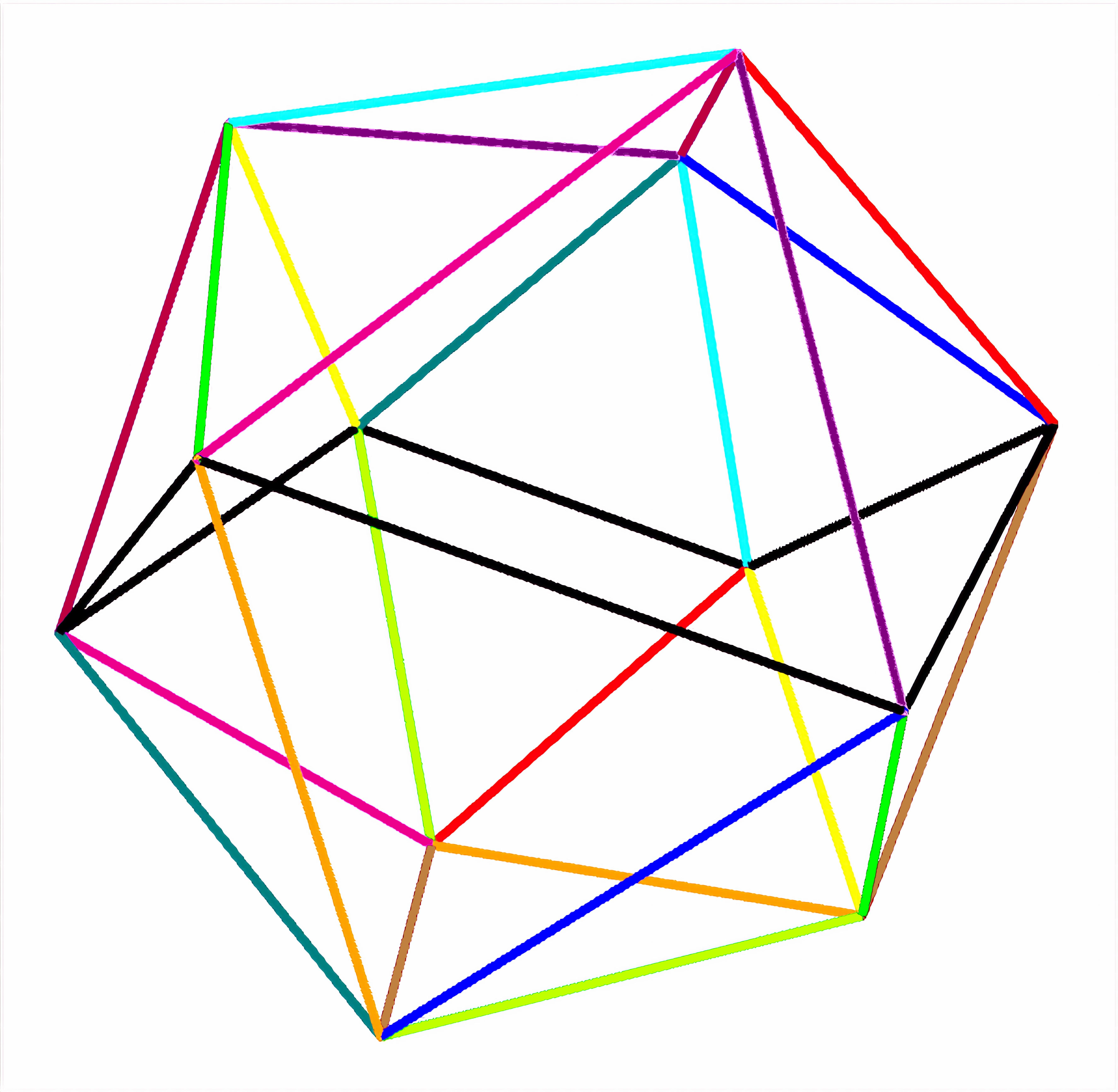}
		\caption{The $ZX$-duality $\tau_0$ on Bring's code. Qubits are associated with edges. Qubits colored in black are left invariant under the $ZX$-duality. Qubits not left invariant come in pairs, as $\tau_0^2=\id$, and are given the same color.}
		\label{fig:duality}
	\end{figure}

	\begin{figure}[t]
		\centering
		\includegraphics{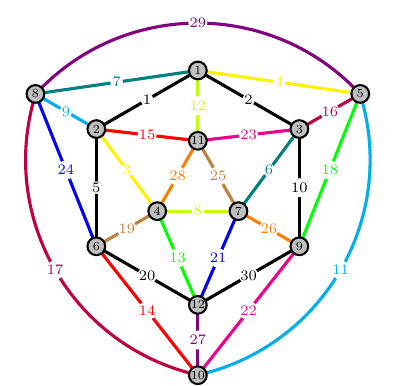}
		\caption{The orbits of $\tau_0$ shown in the planar layout.}
		\label{fig:duality_planar}
	\end{figure}

	\subsection{Fold-transversal gates}\label{sec:fold_transversal_bring}
	We will now use the $ZX$-duality $\tau_0$ from the previous section to construct fold-transversal Clifford gates for Bring's code.
	This is straightforward, as $\tau_0$ and the bases of~$\Homology_1 \oplus \Homology^1$ were carefully chosen.

	\subsubsection{Hadamard-type gate}\label{sec:hadamard}
	We have picked the bases $\mathfrak{B}_{\Homology_1}$ and $\mathfrak{B}_{\Homology^1}$ such that $\tau_0(b_i) = b^i$ and $\tau_0(b^i) = b_i$ for each $b_i\in \mathfrak{B}_{\Homology_1}$ and $b^i\in \mathfrak{B}_{\Homology^1}$.
	Therefore, the representation of $\tilde{H}_{\tau_0}$ in the basis of~$\mathfrak{B}_{\Homology_1 \oplus \Homology^1}$ takes the simple form
	\begin{align*}
		\tilde{H}_{\tau_0} = 
		\left(
		\begin{array}{c|c}
		0 & I_8 \\ \hline
		I_8 & 0
		\end{array}\right).
	\end{align*}
	Changing to the symplectic basis $\mathfrak{B}_{\Homology_1 \oplus \Homology^1}^{'}$ yields
	\begin{align*}
		\sbox0{$\begin{matrix}I_6& & \\ &0&1\\ &1&1\end{matrix}$}
		\sbox1{$\begin{matrix}I_6& & \\ &1&1\\ &1&0\end{matrix}$}
		{\tilde{H}}_{\tau_0}^{'} = 
		\left(
		\begin{array}{c|c}
		\makebox[\wd0]{\huge $0$} & \usebox{0}\\
		\hline
		\usebox{1} & \makebox[\wd0]{\huge $0$}
		\end{array}
		\right) .
	\end{align*}

	\subsubsection{Phase-type gate}\label{sec:phase}
	First we note that $\tau_0$ fulfills all requirements listed in \Cref{sec:phase_type_gates}:
	it is self-dual, the number of stabilized qubits is six, there are either none or two stabilized qubits in the support of each $X$-check and no $X$-check contains an orbit of cardinality two.
	All of this can be verified by inspecting \Cref{fig:duality_planar}.
	
	The representation of~$\tilde{S}_{\tau_0}$ in the basis~$\mathfrak{B}_{\Homology_1 \oplus \Homology^1}$ takes the simple form
	\begin{align*}
	\tilde{S}_{\tau_0} = 
	\left(
	\begin{array}{c|c}
	I_8 & I_8 \\ \hline
	I_8 & 0
	\end{array}\right).
	\end{align*}
	Changing to the symplectic basis $\mathfrak{B}_{\Homology_1 \oplus \Homology^1}^{'}$ yields
	\begin{align*}
	\sbox1{$\begin{matrix}I_6& & \\ &1&1\\ &1&0\end{matrix}$}
	{\tilde{S}}_{\tau_0}^{'} = 
	\left(
	\begin{array}{c|c}
	\makebox[\wd1]{\large $I_8$} & \makebox[\wd1]{\large $0$}\\
	\hline
	\usebox{1} & \makebox[\wd1]{\large $I_8$}
	\end{array}
	\right) .
	\end{align*}

	\subsection{Achievable gates}
	Let $G_{\tau_0}$ be the group generated by the permutation gates and fold-transversal gates, i.e. it is generated by the set $\{\tilde{\sigma}_a, \tilde{\sigma}_b, \tilde{\sigma}_c, \tilde{H}_{\tau_0}, \tilde{S}_{\tau_0}\}$.
	We observe, using \textsc{GAP}~\cite{GAP4}, that the group~$G_{\tau_0}$ is isomorphic to $\operatorname{Sp}_8(\mathbb{F}_2) \times \operatorname{C}_2$.
	We can also verify that~$G_{\tau_0}$ does not depend on the choice of self-inverse $ZX$-duality~$\tau_0$.
	Further, adding any other Hadamard-type or phase-type gate does not enlarge the generated group.
	
	\subsubsection{Full Clifford group}
	Adding the symplectic equivalent of a single $\operatorname{CNOT}$- or $\operatorname{CZ}$-gate between any of the logical qubits to the set $\{\tilde{\sigma}_a, \tilde{\sigma}_b, \tilde{\sigma}_c, \tilde{H}_{\tau_0}, \tilde{S}_{\tau_0}\}$ elevates it to a generating set of~$\operatorname{Sp_{16}(\mathbb{F}_2)}$.
	Hence, we can obtain the full Clifford group~$\mathcal{C}_8$.
	
	This can be done in several ways:
	In \cite{hyperbolic_quantum_storage} the authors show how to use Dehn twists to implement logical $\operatorname{CNOT}$-gates.
	Their technique requires either an increase of the qubit degree\footnote{Meaning the cumulative total number of other qubits a single qubit has to be connected with. The instantaneous qubit degree is constant throughout the procedure.} up to $O(\log(n))$ or, alternatively, an increase in overhead due to ancilla qubits.
	The temporal overhead of this technique is $O(d^2)=O(\log(n)^2)$.
	Alternatively, the authors of~\cite{cohen2021low} suggest a generalization of lattice surgery facilitated by an ancillary surface code.
	This results in an overhead in ancilla qubits scaling as~$O(d^2)$.
	
	\subsubsection{Fold-transversal Clifford group}
	Implementing entangling gates can certainly be done using lattice surgery and code deformation techniques.
	However, it would be much more appealing to achieve all logical Clifford gates fold-transversally.
	It turns out that this can indeed be done:
	Let $\tilde{\sigma}_r = \tilde{\sigma}_a \tilde{\sigma}_b$ and $\tilde{\sigma}_s = \tilde{\sigma}_b \tilde{\sigma}_c$.
	
	Let $\tilde{\sigma}_r = \tilde{\sigma}_a \tilde{\sigma}_b$ and $\tilde{\sigma}_s = \tilde{\sigma}_b \tilde{\sigma}_c$.
	Under the action of~$G_{\tau_0}$ the symplectic space $\Homology_1 \oplus \Homology^1$ decomposes into two invariant subspaces~$V$ and~$W$.
	Both,~$V$ and~$W$ have dimension eight.
	The subspace~$V$ is symplectic and gives rise to a set of Pauli-operators acting on four qubits.
	The subspace~$W$, on the other hand, is maximally isotropic ($W^\perp = W$), in other words, it is a Lagrangian subspace.
	A basis of both sub-spaces can be found in \Cref{sec:invariant_subspaces}.
	The group~$G_{\tau_0}$ acts faithfully as~$\operatorname{Sp}_8(\mathbb{F}_2)$ on~$V$.
	Therefore, we obtain the full Clifford group~$\mathcal{C}_4$ when we restrict the logical subspace to the one induced by~$V$.

	\section{Conclusion}

	We have introduced a method to implement Clifford gates in CSS quantum codes using symmetries that we call $ZX$-dualities.
	A $ZX$-duality defines a fold-transversal operator, which is a quantum circuit with gates supported on the orbits of~$\tau$.
	In particular, we defined Hadamard-type gates~$H_\tau$ and phase-type gates $S_\tau$, derived from suitable $ZX$-dualities~$\tau$.
	For the surface code with~$\tau$ being a literal fold (see \Cref{fig:fold-surface}), this is equivalent to the construction of Moussa~\cite{moussa2016transversal}.
	Further, we explicitly discussed Bring's code as an example, showing that it is possible to generate the Clifford group~$\mathcal{C}_4$ by restricting the logical subspace.
	
	It would be interesting to understand the set of fold-transversal gates for a family of LDPC quantum codes, such as hyperbolic codes or those discussed in \Cref{sec:further_examples}.
	While one can verify the construction using a computer algebra system, as we have done here, this has its limitations.
	This is because the groups generated by the fold-transversal and permutation gates become very large, even for a moderate number of encoded qubits~$k$.

We leave to future work establishing the performance of these
fold-transversal gates under simulated noise and error correction rounds,
as well as their experimental realization.
	
	Finally, we expect our results to generalize to 
other code families that support a notion of Fourier duality:
non-CSS qubit stabilizer codes, qudit stabilizer codes, and subsystem codes.

\section*{Acknowledgements}	
N.P.B. acknowledges support through the EPSRC Prosperity Partnership in Quantum Software for Simulation and Modelling (Grant No. EP/S005021/1).
S.B. acknowledges financial support by the Foundation for
Polish Science through TEAM-NET project (contract no.
POIR.04.04.00-00-17C1/18-00).

\bibliographystyle{quantum}
\bibliography{bibliography}

\onecolumn\newpage
\appendix

\section{Matrices}

Here we give further details on associated matrices coming
from the construction of Bring's code in \Cref{sec:brings_code}.

\subsection{Qubit permutation}\label{sec:permutation_rep}
In the basis $\mathfrak{B}^{'}_{\Homology_1}$ we have
\begin{align*}
\tilde{\sigma}_{\Homology_1}(a) = 
\left(\begin{array}{rrrrrrrr}
1 & 0 & 0 & 0 & 0 & 0 & 0 & 0 \\
0 & 0 & 0 & 0 & 1 & 1 & 1 & 0 \\
0 & 1 & 1 & 0 & 0 & 0 & 0 & 1 \\
0 & 0 & 0 & 1 & 0 & 1 & 1 & 1 \\
0 & 0 & 0 & 0 & 1 & 0 & 0 & 0 \\
1 & 1 & 0 & 0 & 0 & 0 & 1 & 0 \\
1 & 0 & 0 & 0 & 1 & 0 & 1 & 0 \\
0 & 1 & 0 & 0 & 1 & 1 & 1 & 1
\end{array}\right) 
\qquad
\tilde{\sigma}_{\Homology_1}(b) = 
\left(\begin{array}{rrrrrrrr}
0 & 1 & 1 & 0 & 0 & 0 & 0 & 1 \\
1 & 0 & 0 & 0 & 0 & 1 & 1 & 1 \\
1 & 0 & 0 & 1 & 0 & 1 & 0 & 0 \\
0 & 0 & 1 & 0 & 1 & 0 & 0 & 1 \\
0 & 0 & 0 & 1 & 0 & 1 & 1 & 1 \\
0 & 1 & 1 & 0 & 1 & 0 & 0 & 0 \\
1 & 1 & 0 & 1 & 1 & 0 & 1 & 0 \\
1 & 0 & 0 & 1 & 0 & 0 & 1 & 1
\end{array}\right) 
\end{align*}
\begin{align*}
\tilde{\sigma}_{\Homology_1}(c) = 
\left(\begin{array}{rrrrrrrr}
1 & 0 & 0 & 0 & 0 & 1 & 1 & 1 \\
0 & 0 & 1 & 0 & 0 & 1 & 1 & 0 \\
0 & 0 & 1 & 0 & 0 & 0 & 0 & 0 \\
0 & 0 & 0 & 1 & 0 & 0 & 0 & 0 \\
0 & 1 & 0 & 0 & 1 & 0 & 0 & 1 \\
0 & 1 & 0 & 1 & 0 & 0 & 1 & 0 \\
0 & 0 & 1 & 1 & 0 & 0 & 1 & 0 \\
0 & 1 & 1 & 0 & 0 & 1 & 1 & 1
\end{array}\right)
\end{align*}
and in the basis $\mathfrak{B}^{'}_{\Homology^1}$ we have $\tilde{\sigma}_{\Homology^1}(x) = \tilde{\sigma}_{\Homology_1}(x)^\top$, for $x\in\{a,b,c\}$.

Note that the matrices operate on \emph{row vectors} from the \emph{right}.

\subsection{$G_{\tau_0}$-Invariant subspaces of $\Homology_1 \oplus \Homology^1$}\label{sec:invariant_subspaces}
A symplectic basis of the symplectic space $V\subset \Homology_1 \oplus \Homology^1$ in the basis $\mathfrak{B}^{'}_{\Homology_1 \oplus \Homology^1}$: 
\begin{align*}
	\left(\begin{array}{rrrrrrrr|rrrrrrrr}
	0 & 1 & 0 & 0 & 1 & 0 & 0 & 0 & 0 & 0 & 0 & 0 & 0 & 0 & 0 & 0 \\
	0 & 1 & 1 & 0 & 0 & 0 & 0 & 0 & 0 & 0 & 0 & 0 & 0 & 0 & 0 & 0 \\
	1 & 1 & 1 & 1 & 1 & 0 & 1 & 0 & 0 & 0 & 0 & 0 & 0 & 0 & 0 & 0 \\
	1 & 0 & 0 & 1 & 0 & 1 & 1 & 1 & 0 & 0 & 0 & 0 & 0 & 0 & 0 & 0 \\
	\hline
	0 & 0 & 0 & 0 & 0 & 0 & 0 & 0 & 0 & 1 & 1 & 0 & 0 & 0 & 0 & 0 \\
	0 & 0 & 0 & 0 & 0 & 0 & 0 & 0 & 0 & 1 & 0 & 0 & 1 & 0 & 0 & 0 \\
	0 & 0 & 0 & 0 & 0 & 0 & 0 & 0 & 1 & 0 & 0 & 1 & 0 & 1 & 1 & 0 \\
	0 & 0 & 0 & 0 & 0 & 0 & 0 & 0 & 0 & 1 & 1 & 0 & 1 & 1 & 1 & 1
	\end{array}\right)
\end{align*}
A basis of the Lagrangian space $W\subset \Homology_1 \oplus \Homology^1$ in the basis $\mathfrak{B}^{'}_{\Homology_1 \oplus \Homology^1}$: 
\begin{align*}
	\left(\begin{array}{rrrrrrrr|rrrrrrrr}
	1 & 0 & 0 & 0 & 1 & 1 & 0 & 1 & 0 & 0 & 0 & 0 & 0 & 0 & 0 & 0 \\
	0 & 0 & 0 & 0 & 0 & 1 & 0 & 1 & 0 & 0 & 0 & 0 & 0 & 0 & 0 & 0 \\
	0 & 0 & 1 & 1 & 0 & 0 & 0 & 0 & 0 & 0 & 0 & 0 & 0 & 0 & 0 & 0 \\
	0 & 1 & 0 & 0 & 0 & 0 & 1 & 1 & 0 & 0 & 0 & 0 & 0 & 0 & 0 & 0 \\
	\hline
	0 & 0 & 0 & 0 & 0 & 0 & 0 & 0 & 1 & 0 & 0 & 0 & 1 & 1 & 1 & 1 \\
	0 & 0 & 0 & 0 & 0 & 0 & 0 & 0 & 0 & 0 & 0 & 0 & 0 & 1 & 1 & 1 \\
	0 & 0 & 0 & 0 & 0 & 0 & 0 & 0 & 0 & 0 & 1 & 1 & 0 & 0 & 0 & 0 \\
	0 & 0 & 0 & 0 & 0 & 0 & 0 & 0 & 0 & 1 & 0 & 0 & 0 & 0 & 1 & 0
	\end{array}\right)
\end{align*}

In the basis of V above, the action of the generators of $G_{\tau_0}\cong \operatorname{Sp}_8(\mathbb{F}_2)$ restricted to $V$ is given by the following matrices:
\begin{align*}
	\tilde{\sigma}_r |_V = 
	\left(\begin{array}{rrrr|rrrr}
	1 & 0 & 0 & 1 & 0 & 0 & 0 & 0 \\
	1 & 0 & 1 & 0 & 0 & 0 & 0 & 0 \\
	1 & 0 & 0 & 0 & 0 & 0 & 0 & 0 \\
	1 & 1 & 0 & 0 & 0 & 0 & 0 & 0 \\
	\hline
	0 & 0 & 0 & 0 & 1 & 0 & 0 & 1 \\
	0 & 0 & 0 & 0 & 1 & 0 & 1 & 1 \\
	0 & 0 & 0 & 0 & 0 & 1 & 0 & 0 \\
	0 & 0 & 0 & 0 & 1 & 1 & 0 & 0
	\end{array}\right) 
	\qquad
	\tilde{\sigma}_s |_V = 
	\left(\begin{array}{rrrr|rrrr}
	0 & 0 & 0 & 1 & 0 & 0 & 0 & 0 \\
	0 & 0 & 1 & 1 & 0 & 0 & 0 & 0 \\
	1 & 0 & 0 & 1 & 0 & 0 & 0 & 0 \\
	0 & 1 & 0 & 1 & 0 & 0 & 0 & 0 \\
	\hline
	0 & 0 & 0 & 0 & 0 & 0 & 0 & 1 \\
	0 & 0 & 0 & 0 & 0 & 0 & 1 & 0 \\
	0 & 0 & 0 & 0 & 1 & 1 & 0 & 0 \\
	0 & 0 & 0 & 0 & 0 & 1 & 0 & 1
	\end{array}\right) 
\end{align*}
\begin{align*}
	\tilde{H}_{\tau_0} |_V = 
	\left(\begin{array}{rrrr|rrrr}
	0 & 0 & 0 & 0 & 1 & 0 & 0 & 0 \\
	0 & 0 & 0 & 0 & 0 & 1 & 0 & 0 \\
	0 & 0 & 0 & 0 & 0 & 0 & 0 & 1 \\
	0 & 0 & 0 & 0 & 0 & 0 & 1 & 1 \\
	\hline
	1 & 0 & 0 & 0 & 0 & 0 & 0 & 0 \\
	0 & 1 & 0 & 0 & 0 & 0 & 0 & 0 \\
	0 & 0 & 1 & 1 & 0 & 0 & 0 & 0 \\
	0 & 0 & 1 & 0 & 0 & 0 & 0 & 0
	\end{array}\right) 
	\qquad
	\tilde{S}_{\tau_0} |_V = 
	\left(\begin{array}{rrrr|rrrr}
	1 & 0 & 0 & 0 & 0 & 0 & 0 & 0 \\
	0 & 1 & 0 & 0 & 0 & 0 & 0 & 0 \\
	0 & 0 & 1 & 0 & 0 & 0 & 0 & 0 \\
	0 & 0 & 0 & 1 & 0 & 0 & 0 & 0 \\
	\hline
	1 & 0 & 0 & 0 & 1 & 0 & 0 & 0 \\
	0 & 1 & 0 & 0 & 0 & 1 & 0 & 0 \\
	0 & 0 & 1 & 1 & 0 & 0 & 1 & 0 \\
	0 & 0 & 1 & 0 & 0 & 0 & 0 & 1
	\end{array}\right) 
\end{align*}
Note that matrices operate on \emph{row vectors} from the \emph{right}.

\section{Further Examples}\label{sec:further_examples}
Here we discus some further examples of fold-transversal gates and the groups they generate.
We do not go through all the details, as we did for Bring's code.
However, the derivation of these results follows along the same steps.

\subsection{Hyperbolic Surface Codes}

We list some more examples of $\{5,5\}$ hyperbolic surface codes
obtained from the construction in \Cref{sec:surface-codes},
the first of which is Bring's code.
The table shows the parameters $[[n,k,d]]$ for the code $C$,
and the order of the automorphism group $\Aut(C)$ which is
equal to the number of $ZX$-dualities.

\begin{center}
\begin{tabular}{| c |c | }
\hline
 Code parameters & $|\mathcal{D}_{ZX}| = |\Aut(C)|$ \\ 
\hline\hline
 $[[30,8,3]]$   & 120 \\  
 $[[40,10,4]]$  & 160 \\
 $[[80,18,5]]$  & 320 \\
 $[[150,32,6]]$ & 600 \\
 $[[180,38,4]]$ & 720 \\
 $[[330,68,6]]$ & 1320 \\
 $[[480,95,?]]$ & 1920 \\
\hline
\end{tabular}
\end{center}

\subsection{Hypergraph Product Code}
Consider the $[6,2,4]$ code defined by the parity check matrix
\begin{align*}
	\left(\begin{array}{rrrr}
		1 & 1 & 0 & 0 \\
		0 & 1 & 0 & 0 \\
		1 & 0 & 1 & 0 \\
		0 & 0 & 1 & 0 \\
		1 & 0 & 0 & 1 \\
		0 & 0 & 0 & 1
	\end{array}\right) 
\end{align*}
acting on row vectors.
The hypergraph product of this code with its dual gives a $[[52,4,4]]$ quantum code with checks of weights 3, 4 and 5.
The automorphism group of the code, as defined in \Cref{sec:ZX-dualities}, is $(\operatorname{S}_3 \times \operatorname{S}_3) \rtimes \operatorname{C}_2$.
This can be seen from the fact that $\operatorname{S}_3$ is the symmetry group of the input code and we have the exchange of $X$- and $Z$-checks, giving the semi-direct product with~$\operatorname{C}_2$.
There are four $ZX$-dualities and all of them fulfill the requirements of the theorems in \Cref{sec:fold_transversal_gates}.
Using a suitable choice of basis for $\Homology_1\oplus \Homology^1$, we can construct eight Hadamard- and phase-type gates.
Together with the permutation gates they generate the group $(\operatorname{A}_6 \times \operatorname{A}_6) \rtimes \operatorname{D}_4$.
This is an index-45696 subgroup of~$\operatorname{Sp}_8(\mathbb{F}_2)$.

As for Bring's code, we can obtain the full group symplectic group~$\operatorname{Sp}_8(\mathbb{F}_2)$ by adding a single entangling gate, e.g. by using surgery with an ancillary surface code~\cite{cohen2021low}.
Alternatively, the largest symplectic subgroup of $(\operatorname{A}_6 \times \operatorname{A}_6) \rtimes \operatorname{D}_4$ is $\operatorname{Sp}_4(\mathbb{F}_2)$, so we can obtain the full Clifford group on two logical qubits when restricting the code space accordingly.

\subsection{Balanced Product Code}
Balanced product quantum codes are a generalization of the hypergraph product codes, allowing for the construction of quantum codes that can achieve higher encoding rates and larger distances~\cite{breuckmann2021balanced}.
The balanced product between two classical codes, that share a common symmetry, can be understood as the hypergraph product between the two, with the symmetry subsequently factored out.
We refer to \cite{PRXQuantum.2.040101,breuckmann2021balanced} for more details.
Symmetry groups play a prominent role in the construction of balanced product codes, so that they are amenable to our scheme.

For symmetric classical codes, let us consider the extended binary Golay code $\mathcal{G}_{24}$ with parameters $[24,12,8]$ and its dual $\mathcal{G}_{24}^\top$.
It turns out that, coincidentally, the parity checks of the Golay code can also be derived from the great dodecahedron (\Cref{fig:great_dodecahedron}), see \cite{golay_code} for details.
The automorphism group of this parity check matrix of~$\mathcal{G}_{24}$ is thus the same as that of Bring's code~$\operatorname{S}_5 \times \operatorname{C}_2$ (see \Cref{sec:construction_bring}).
The alternating group~$\operatorname{A}_5$ is a unique subgroup.
The balanced product code $\mathcal{G}_{24} \otimes_{\operatorname{A}_5} \mathcal{G}_{24}^\top$ is a $[[20,4,3]]$ code with check weights 4, 5 and 7.
It has 10 $ZX$-dualities, giving rise to Hadamard-type gates.
However, only one of the $ZX$-dualities satisfies the requirements of \Cref{thm:S_type_gate}, giving rise to a single phase-type gate.
The group of permissible qubit permutations is $\operatorname{C}_2\times \operatorname{S}_4$ and it has a faithful representation on $\Homology_1 \oplus \Homology^1$.
The permutation gates, the Hadamard-type gates and the phase-type gate generate the group $\operatorname{S}_3\times \operatorname{S}_4$, which is an index-329011200 subgroup of $\operatorname{Sp}_8(\mathbb{F}_4)$.
Adding a single entangling gate generates the full symplectic group~$\operatorname{Sp}_8(\mathbb{F}_2)$.

Alternatively, consider the smaller subgroup $H = \operatorname{C}_2 \times (\operatorname{C}_5 \rtimes \operatorname{C}_4)$.
The balanced product code $\mathcal{G}_{24} \otimes_H \mathcal{G}_{24}^\top$ is a $[[30,6,3]]$ code.
Similarly, the permutation gates, the Hadamard-type gates and the phase-type gate generate an index-135491300609753088000 subgroup of $\operatorname{Sp}_{12}(\mathbb{F}_4)$.
Adding a single entangling gate generates the full symplectic group~$\operatorname{Sp}_{12}(\mathbb{F}_2)$.

A $[[60,12,5]]$ code can be obtained by taking the balanced product over the even smaller subgroup $\operatorname{C}_3 \rtimes \operatorname{C}_4$.
Unfortunately, the space $\Homology_1 \oplus \Homology^1$ is 24-dimensional and thus too large for our computational methods.

\subsection{Block Code}
The following example shows that the fold-transversal gates do not always suffice to generate large groups.
Consider the $[[16,4,4]]$ code, obtained by concatenating the $[[4,2,2]]$ code with itself.
It is a weakly self-dual CSS code with classical code defined by the check matrix
\begin{align*}
	\left(\begin{array}{rrrrrr}
	1 & 1 & 1 & 0 & 0 & 0 \\
	1 & 1 & 0 & 1 & 0 & 0 \\
	1 & 1 & 0 & 0 & 1 & 0 \\
	1 & 1 & 0 & 0 & 0 & 1 \\
	1 & 0 & 1 & 0 & 0 & 0 \\
	1 & 0 & 0 & 1 & 0 & 0 \\
	1 & 0 & 0 & 0 & 1 & 0 \\
	1 & 0 & 0 & 0 & 0 & 1 \\
	0 & 0 & 1 & 0 & 0 & 0 \\
	0 & 0 & 0 & 1 & 0 & 0 \\
	0 & 0 & 0 & 0 & 1 & 0 \\
	0 & 0 & 0 & 0 & 0 & 1 \\
	0 & 1 & 1 & 0 & 0 & 0 \\
	0 & 1 & 0 & 1 & 0 & 0 \\
	0 & 1 & 0 & 0 & 1 & 0 \\
	0 & 1 & 0 & 0 & 0 & 1
	\end{array}\right)
\end{align*}
A natural basis of $H_1$ and $H^1$ is 
\begin{align*}
	\left(\begin{array}{rrrrrrrrrrrrrrrr}
	1 & 1 & 0 & 0 & 1 & 1 & 0 & 0 & 0 & 0 & 0 & 0 & 0 & 0 & 0 & 0 \\
	1 & 1 & 0 & 0 & 0 & 0 & 0 & 0 & 0 & 0 & 0 & 0 & 1 & 1 & 0 & 0 \\
	0 & 1 & 1 & 0 & 0 & 1 & 1 & 0 & 0 & 0 & 0 & 0 & 0 & 0 & 0 & 0 \\
	0 & 1 & 1 & 0 & 0 & 0 & 0 & 0 & 0 & 0 & 0 & 0 & 0 & 1 & 1 & 0
	\end{array}\right)
\end{align*}
There exist 20 $ZX$-dualities, giving rise to 8 distinct Hadamard-type gates on $\Homology_1 \oplus \Homology^1$.
Only one of the $ZX$-dualities satisfies the requirements of \Cref{thm:S_type_gate}, giving rise to a phase-type gate.
The group of permissible qubit permutations is $\operatorname{C}_2\times \operatorname{S}_4$.
However, unlike in the previous examples, the induced representation on $\Homology_1 \oplus \Homology^1$ is not faithful, as its image is isomorphic to the dihedral group~$\operatorname{D}_6$.
The permutation gates, the Hadamard-type gates and the phase-type gate generate the group $\operatorname{C}_2\times \operatorname{S}_3\times \operatorname{S}_3$.
Unlike in the previous examples, note even adding all $\tilde{\operatorname{CZ}}_{i,j}$-gates does not give rise to the full symplectic group~$\operatorname{Sp}_8(\mathbb{F}_2)$, but to the orthogonal group~$\operatorname{O}^+_8(\mathbb{F}_2)$, which is an index-272 subgroup.

\end{document}